\documentclass[sn-mathphys-num,Numbered]{sn-jnl}

\usepackage[T1]{fontenc}
\usepackage{graphicx}
\usepackage{algpseudocode}
\usepackage{algorithm}
\usepackage{amsmath}
\usepackage{amssymb}
\usepackage{multirow}
\usepackage{caption}
\usepackage[table]{xcolor}
\usepackage{amsthm}
\usepackage{booktabs}
\usepackage{longtable}
\usepackage{array}

\theoremstyle{definition}
\newtheorem{definition}{Definition}
\newtheorem{example}{Example}
\newtheorem{theorem}{Theorem}
\newtheorem{proposition}{Proposition}
\newtheorem{corollary}{Corollary}

\def\implies{\rightarrow}
\def\next{\mathcal{X}}
\def\until{~\mathcal{U}}
\def\entails{\vDash}
\def\doesnotentail{\nvDash}
\def\true{\top}
\def\false{\bot}
\def\maybe{\mathcal{M}}

\def\land{\wedge}

\begin{document}

\title[Dynamically Reprogrammable Monitors]{Dynamically Reprogrammable Runtime 
  Monitors for Bounded-time MTL}

\author[1]{\fnm{Chirantan} \sur{Hebballi}}\email{EE20BT013.alum24@iitdh.ac.in} 
\author[1]{\fnm{Akash} \sur{Poptani}}\email{EE20BT005.alum24@iitdh.ac.in}
\author[2]{\fnm{Amrutha} \sur{Benny}}\email{112004003@smail.iitpkd.ac.in}
\author*[1]{\fnm{Rajshekar} \sur{Kalayappan}}\email{rajshekar.k@iitdh.ac.in}
\author[2]{\fnm{Sandeep} \sur{Chandran}}\email{sandeepchandran@iitpkd.ac.in}
\author[1]{\fnm{Ramchandra} \sur{Phawade}}\email{prb@iitdh.ac.in}

\affil[1]{\orgname{Indian Institute of Technology Dharwad}, 
  \state{Karnataka}, 
  \postcode{580011}, 
  \country{India}
}
\affil[2]{\orgname{Indian Institute of Technology Palakkad}, 
  \state{Kerala}, 
  \postcode{678623}, 
  \country{India}
}

\abstract{
  A Runtime Verification (RV) framework that supports online, {\em at-speed}
verification of properties that can change dynamically (during in-field
operations) will benefit a large variety of applications. Several
state-of-the-art RV frameworks propose to implement monitors on FPGAs. While
this approach can support changes to the property being monitored during
in-field operations, they struggle to keep pace with the system under
verification which use high-performance processors. In this work, we propose a
novel, reprogrammable monitor that is implemented using standard cells instead
of FPGAs. This allows the monitor to be co-located with the system under
verification (on the same die), and hence is amenable to at-speed monitoring of
properties. Our proposed design consists of a programmable unit that implements
five basic operations and a set of queue-update rules. We show that a
composition of such programmable units faithfully implements discrete time,
bounded MTL. We demonstrate through simulations that our proposed monitor can be
reprogrammed (through its I/O pins) post deployment. A fairly large monitor
which can support MTL formulae upto $16$ atomic propositions occupies only $0.55 mm^2$,
while operating at a frequency of $1.25 GHz$.

}
\keywords{
  Runtime Verification (RV), MTL monitors, Dynamically
  Reprogrammable Hardware monitors, Integrated Circuits (ICs)
}

\maketitle

\section{Introduction}
\label{sec:introduction}

Runtime Verification (RV) entails augmenting a System Under Verification (SUV)
with a monitor that continuously observes its functioning and flags any
deviation from the expected behavior. RV is particularly effective when the
monitor is non-intrusive and operates {\em at-speed} (online), in which case, the
monitor verifies the actions of the SUV in parallel and verdicts are issued
within a few clock cycles without affecting the timing (and functional) behavior
of the SUV. A monitor which allows the property to be verified to change over
time as SUV operates in-field is desirable when the SUV is a complex
safety-critical
system~\cite{r2u2,finkbeiner2009monitor,jakvsic2015signal}.

Several prior works have achieved this twin objective of at-speed verification
and flexibility by implementing the monitors on an
FPGA~\cite{lu2008automatic,r2u2,moosbrugger2017r2u2,pellizzoni2008hardware,reinbacher2011past}. 
An FPGA offers better performance while simultaneously achieving similar 
flexibility as compared to the corresponding software implementation of the 
monitor. This is possible because although an FPGA is a chip, it is built 
using Lookup Tables (LUTs), flip-flops (FFs), and multiplexers (MUXs). LUTs 
are simple storage elements (asynchronous SRAMs) that store all possible outputs 
of a Boolean function. Inputs (or arguments of the Boolean function) are used to
index into this table and fetch the appropriate output. New values are written 
into the LUTs when {\em reconfiguring} an FPGA, thereby allowing it to change its
functionality post deployment. Since the LUTs can be written as often as needed,
the synthesis procedures used to generate software monitors from input
specifications can be repurposed to generate different hardware monitors as
well.

There are three drawbacks with this approach. First,
an FPGA implementation is inefficient as compared to the same
circuit implemented using {\em standard cells}, as would be the case if they are
implemented within an Integrated Circuit (IC). A standard cell is an
implementation of basic gates such as NAND, which is optimized for a specific
technology node. For example, a two-input NAND cell in a standard cell library
has three-dimensional layout descriptions which a foundry can use to create and
arrange transistors that implement the NAND function. As a result, a design using
standard cells offers lower latency and higher throughput (operates at a higher clock
frequency -- capable of observing a larger number of events per second from the SUV),
consumes lesser power and occupies smaller area when implemented using standard 
cells instead of an FPGA. The second drawback is that reconfiguring an FPGA is 
time-consuming because it typically requires rewriting the contents of a few 
thousands to hundreds of thousands of LUTs. Finally, the bandwidth available for
communication between an FPGA and a processor core is limited even on 
state-of-the-art Systems-on-Chip (SoCs) such as Zynq Ultrascale+ that
integrate an FPGA and a processor on the same die~\cite{jindal2019dhoom}. A
flexible, reprogrammable monitor that is amenable to standard-cell
implementation do not have such restrictions, thereby enabling fine-grained
(every instruction) and at-speed monitoring of softwares running on processor
cores.

Several previous works have alleviated these drawbacks to
various extent. The drawback of increased reconfiguration time is addressed
through several techniques. One approach is to implement MTL operators as
fixed blocks (with a pre-defined number of instances of each operator being implemented), and route
the input stream through these blocks appropriately to monitor
properties~\cite{r2u2}. Another technique is to exploit sub-expressions that
are common across input properties to minimize the resources required, thereby
also reducing the amount of reconfiguration
required~\cite{kempa2020embedding}. Finally, it is also possible to use {\em
partial reconfiguration}, which state-of-the-art FPGAs support, where only a
small subset of LUTs are reconfigured instead of all the
LUTs~\cite{rahmatian2012adaptable}. The drawback of increased communication
latency between the processor and the FPGA fabric is alleviated by reducing the
volume of data to be transferred to the monitor by restricting the verification
to only bus transactions. This reduces the volume of data because bus
transactions are at least an order of magnitude fewer than the activity on the
register file (every instruction)~\cite{pellizzoni2008hardware}.
However, doing so requires the properties to be redefined over
a coarser (time) granularity.

In this work, we overcome all the three drawbacks of using FPGAs by synthesizing
hardware monitors that are amenable to be implemented using standard cells, 
and can be co-located with the processor, thereby enabling fine-grained and at-speed
monitoring. This necessitates a design approach that achieves flexibility
through dynamic reprogrammability instead of reconfiguration~\cite{farmltl}.
Under this approach, the circuit is fixed and input instructions
(byte-sequences) are used to configure the functioning of the monitor post
deployment. This approach has several benefits. The performance
parameters such as the frequency of the operation or the area consumed by the
monitoring circuit do not change with the input property. Moreover, since the
circuit design of the monitor is independent of the input property, and is
known at design-time, we can redesign the circuit (say pipeline it) to match
its operating frequency with that of the SUV, if required. This ensures that
the monitor keeps pace with the SUV (processes one input symbol every cycle).

We begin by designing an {\em Abstract Machine} (AM) that can be programmed to
perform one of five simple operations: \texttt{and}, \texttt{or}, \texttt{not},
\texttt{implies}, and \texttt{wire} (identity function), and update a custom
queue based on the result of the performed operation. We then prove that these
AMs can be programmed to realize Evaluator Machines (EMs) for {\em all} Metric
Temporal Logic (MTL)~\cite{alur1993real} operators. Finally, using programmable
interconnects, we compose several EMs together to realize a monitor for a given
MTL formula. Such composition of EMs is governed by the Abstract Syntax Tree
(AST) of the formula~\cite{r2u2,jakvsic2015signal}. The exact operation to be
performed at each AM, and the forwarding rules for the interconnect are
programmed before event streams are input to the AMs. Our proposed framework
consists of a parameterized hardware design, as well as a compiler that
generates programming bits to configure the AMs and the interconnects suitably.
Our experiments show that even a large MTL monitor (that supports upto 16 atomic
propositions and a maximum MTL operator time bound of 256 time steps) occupies
a modest area of $0.55 mm^2$ and operates at 1.25 $GHz$ clock frequency when
implemented using a conservative $32nm$ standard cell library.
This means that the developed monitor is capable of observing
$1.25 \times 10^9$ events from the SUV every second, which is significantly
higher than that possible with an FPGA-based monitor. Also, this high
efficiency -- high throughput and low area -- is achieved for every possible
MTL formula that can be expressed using 16 atomic propositions, 16 MTL
operators, and a maximum MTL operator time bound of 256 time steps. 
The proposed tool has been made available online~\cite{hebballi_2026_19084848}.

The rest of the paper is organized as follows. We discuss the background and
related works in Section~\ref{sec:related_main}. The design of the proposed
programmable MTL monitor is discussed in Section~\ref{sec:abstractmachines}.
Section~\ref{sec:arch} outlines an implementation that is amenable for
synthesis using standard cells, as well as the details for programming these
monitors. Section~\ref{sec:tool} discusses the developed 
tool~\cite{hebballi_2026_19084848} and the results of synthesizing MTL 
monitors of different size.  Finally, we conclude in Section~\ref{sec:conc}.

\section{Preliminaries and Related Work}
\label{sec:related_main}
First we describe syntax and semantics of Metric Temporal Logic. 
\subsection{Metric Temporal Logic (MTL)}
\label{sec:mtl}

An observation made about the SUV at any point in time is referred to as an {\em
  event}. An event consists of a set of valuations of Boolean
variables. The set of Boolean variables forms the set of atomic propositions
(APs) on which the desired properties are defined. Thus, we can formally state
that an event $e \in 2^{AP}$. The SUV is observed periodically and hence a run
or an execution of the SUV is an infinite sequence of events $e_0, e_1, e_2,
\ldots$, with $e_i$ referring to the observation made at time step $i$. In other
words, a run of the SUV can be represented as the function $\mathbb{N}_0
\rightarrow 2^{AP}$, where $\mathbb{N}_0$ is the set $\{0,1,2,\ldots\}$ 
of natural numbers. 

\textbf{Syntax}: If $\sigma$ is any atomic proposition, and if $\phi$ and $\psi$
are valid MTL formulae, then the following are valid MTL formulae as well:

\begin{center}
$true \mid \sigma \mid \neg \phi \mid \phi \lor \psi \mid 
\phi \land \psi \mid \phi \implies \psi \mid \next \phi \mid 
\phi \until_{[t_1, t_2]} \psi \mid \Box_{[t_1, t_2]} \phi \mid 
\diamond_{[t_1, t_2]} \phi$
\end{center}

Time bounds are specified as intervals: we write $[t_1, t_2]$ to
denote the set $\{ i \mid t_1 \leq i \leq t_2 ~\mbox{where}~ i, t_1, t_2 \in \mathbb{N}_0\}$.
We restrict ourselves to the bounded semantics, that is, $t_2$ cannot be infinity.
Discrete-time bounded semantics are well-suited to express many real-time
system specifications~\cite{reinbacher2014runtime,r2u2,jakvsic2015signal,monitoringMTL,bartocci2018specification}.\\

\noindent\textbf{Semantics}: We define the satisfaction of an MTL formula $\phi$ at time $i$
by an execution $x$, denoted $x^i \entails \phi$, inductively as follows:

\begin{itemize}
	\item $x^i \entails true$ is true,
	\item $x^i \entails \sigma$ iff $\sigma$ holds in $e_i$,
	\item $x^i \entails \neg\phi$ iff $x^i \doesnotentail \phi$,
	\item $x^i \entails \phi \lor \psi$ iff $x^i \entails \phi \lor x^i \entails \psi$,
	\item $x^i \entails \phi \land \psi$ iff $x^i \entails \phi \land x^i \entails \psi$,
	\item $x^i \entails \phi \implies \psi$ iff $x^i \doesnotentail \phi \lor x^i \entails \psi$,
	\item $x^i \entails \next\phi$ iff $x^{i+1} \entails \phi$,
	\item $x^i \entails \phi \until_{[t_1, t_2]} \psi$ iff $\exists j \in [i+t_1, i+t_2] : ( x^j \entails \psi \land (\forall k \in [i, j) :  x^k \entails \phi))$
	\item $x^i \entails \Box_{[t_1, t_2]} \phi$ iff $\forall j \in [i+t_1, i+t_2] : x^j \entails \phi$
	\item $x^i \entails \diamond_{[t_1, t_2]} \phi$ iff $\exists j \in [i+t_1, i+t_2] : x^j \entails \phi$
\end{itemize}

\begin{definition} (The Runtime Verification Problem). Given an MTL formula $\phi$ over
a set of atomic propositions $AP$, and a word $x$ over $2^{AP}$, the Runtime Verification problem
consists of constructing a word $x'$ over $\{true(\true),
false(\false)\}$, such that $x'(i)=\true$ iff
$x^i \entails \phi$.
\end{definition}

The word $x$ corresponds to the stream of observations from the SUV, and the
word $x'$ corresponds to the stream of verdicts.

For a given MTL formula $\phi$, its size $|\phi|$ 
is defined as the total number of appearances of
operators in it.  
The smallest number of time steps by which we can get the verdict for
$\phi$ is called its horizon, denoted by $H(\phi)$. It is recursively defined as follows:

\begin{itemize}
	\item $H(true) = H(\sigma) = 0$, $\forall\sigma \in AP$,
	\item $H(\neg\phi) =  1 + H(\phi)$,
	\item $H(\phi \lor \psi) = H(\phi \land \psi) = H(\phi \implies \psi) =  1 + max(H(\phi), H(\psi))$,
	\item $H(\next \phi) = 2 + H(\phi)$,
	\item $H(\phi \; \until_{[t_1, t_2]} \psi) =  t_2 + 1 + max(H(\phi), H(\psi))$
	\item $H(\Box_{[t_1, t_2]} \phi) = H(\diamond_{[t_1, t_2]} \phi) = t_2 + 1 + H(\phi)$
\end{itemize}
 
We define a map $l$ from MTL operators to $\mathbb{N}_0$: 
$l(\neg)=l(\land)= l(\lor)=l(\implies)=1; 
l(\next)=2;
l(\diamond_{[t_1, t_2]})= 
l(\Box_{[t_1, t_2]})= 
l(\until_{[t_1, t_2]})= t_2+1$.

Next we describe past approaches towards RV of MTL Specifications. 

\subsection{Related works}
\label{sec:related}

There are three broad approaches that prior works have used to construct MTL
monitors. The first approach is construct a table having $H(\phi)$ rows and
$|\phi|$ columns for a specification
$\phi$~\cite{maler2008checking,finkbeiner2009monitor,ho2014online,monitoringMTL,geilen2003improved}.
The content of a cell $(i, j)$ gives the verdict of subformula $j$ at time
$k-i$, as computed until time $k$. This verdict computed until time $k$ need not
be final verdict and may change as time progresses and more observations
about the SUV are available. At each time step, when a new event is received at
the monitor, the oldest row in the table is removed, and the table contents
shift by one row to make place for the newly received event. The input formula
is then solved in a bottom-up manner using the newly available AP valuations,
and the values of different cells in the table are updated. The sub-formulae in
the oldest row are guaranteed to have their final verdicts (follows from
$H(\phi)$). Several works have improved this further by: (i) not re-processing
values that were processed before, (ii) re-writing the formula to compute the
value of a cell to depend only on future values in cases where the cell value
cannot be computed using observations already at hand. This approach typically
requires $O(H \times |\phi|)$ space, and has a latency of $O(H \times |\phi|)$.
The throughput is typically $\frac{1}{H \times |\phi|}$, as for each new
observation, $O(H \times |\phi|)$ time is to be spent processing it. This
approach is popular among software monitors.

Another approach to construct MTL monitors is to construct the Abstract Syntax
Tree (AST) of the input formula and exploit the resulting tree structure to
derive high-throughput monitors that operate in a streaming manner (input values
are read only once; no
backtracking)~\cite{r2u2,jakvsic2015signal,selyunin2016monitoring,basin2011algorithms}.
These works instantiate an evaluator for each node in the tree (corresponds to
an MTL operator in the input specification). The results produced by an
evaluator module $c$ are conveyed to the evaluator module $p$ that represents the
parent of $c$ in the AST. At each time $i$, the event received at time $i$
enters the tree at the leaves, and propagates towards the root node in
subsequent time steps. The verdict corresponding to the event at time $i$ is
available at the root at time $i+O(H)$. The throughput is $\frac{1}{l_{max}}$,
where $l_{max}$ is the latency of the most complex evaluator module in the tree.
Our work also
adopts a similar approach where each node in the AST corresponds to a
programmable AM. Earlier works have used this approach to implement MTL monitors
on FPGAs~\cite{r2u2,jakvsic2015signal}. They have also stressed on the
importance of supporting changes to the input specifications in-field (or
programmability)~\cite{r2u2}. However, unlike our approach, they create a
separate evaluator module for each MTL operator, instantiate several evaluator
modules on the FPGA, and reconfigure the connections between these evaluators to
realize monitors for different MTL formulae. In contrast, our evaluator module
(or AM) is capable of evaluating all MTL operators and hence, we only need to
enable appropriate operation and/or select appropriate results produced to
realize monitors for different MTL formulae.

A third approach is to employ Timed Automata to capture MTL
specifications~\cite{maler2007synthesizing,divakaran2010conflict}. However,
there are no programmable Timed Automata Processors to the best of our
knowledge. Thus, programmability will have to be achieved through software or
FPGA implementations, which suffers from the drawbacks discussed in
Section~\ref{sec:introduction}.

Some works have explored the use of novel computing paradigms such as
processing-in-memory and neuromorphic computing to implement efficient
monitors~\cite{selyunin2016monitoring,wang2016overview}. These works are
orthogonal to our proposed work because commercially available System-on-Chips
(SoCs) are yet to integrate such accelerators on to the same die, which is
essential in our case to achieve at-speed monitoring.

Some researchers have studied runtime monitoring of quantative semantics of
MTL~\cite{chattopadhyay2020verified,donze2013signal,donze2013efficient}. This
involves not only determining the satisfaction of the given property, but also
determining the robustness of the SUV's execution, that is, how close the SUV
came to violating the property. In this work, we limit ourselves to the
qualitative semantics, as described in Section~\ref{sec:mtl}.

\section{Abstract Machines for MTL Formulae}
\label{sec:abstractmachines}

The building block of our proposed MTL monitor is an {\em Abstract Machine}
(AM), which is described in Section~\ref{sec:am_operators}.  The Runtime
Verification monitor of an MTL formula of size 1 is termed an Evaluator Machine
(EM). In Section~\ref{sec:realizing_evaluators}, we show how EMs for each MTL
operator can be realized by programming one or more AMs. We then show, in
Section~\ref{sec:em_composing}, how the RV monitor for an MTL formula of size
greater than 1 can be constructed by composing the EMs for each individual
operator in the formula, as directed by the AST of the formula.

\subsection{AM and its operations}
\label{sec:am_operators}
 
In each time step, an AM consumes one or two input operands, and produces one
output verdict. The input operands and the output belong to $\{true(\true),
false(\false)\}$.
 
An AM modifies a finite queue, which is used to hold partial results produced
when generating verdicts. We call this structure a {\em Que}.  Each cell of the
Que can store: (i) true ($\true$), (ii) false ($\false$), or (iii) {\em Maybe}
($\maybe$).  Let $\Gamma=\{\true, \false, \maybe\}$ denote the Que alphabet.
Let $Q^i$ denote the contents of the $Q$ at time instance $i$.  Let $Q^i[k]$
denote contents of $k$-th cell of $Q$ at time instance $i$ ($Q[k]$ denotes the
contents of $k$-th cell in the current time instance).  Let $Q[0]$ denote the
tail of $Q$.  Apart from the two basic operations of addition and deletion, $Q$
also supports a conditional $modify$ operation. These operations are defined
below.  Let $\emph{Intervals}$ denote the set of all intervals over
$\mathbb{N}_0$. 

The basic operations \emph{QOps} on a Que $Q$ are described below: 
\begin{itemize}  
	\item  $add(Q)$ pushes the contents of the que to higher index cells, and
		inserts element $\maybe$ i.e., $Q[0]=\maybe$. 
	\item  $del(Q, Head)$ deletes element $Q[Head]$ from the que. 
\item  $modify(Q,\mathcal{I},  \true_\maybe)$ conditionally modifies the que contents: for all $t$ in 
	interval $\mathcal{I}$, if $Q[t]=\maybe$ then $Q[t]=\true$
\item  $modify(Q,\mathcal{I},  \false_\maybe)$ 
 conditionally modifies the que contents: for all $t$ in interval  $\mathcal{I}$, if $Q[t]=\maybe$ then $Q[t]=\false$
\end{itemize}

An AM is a single state machine, so for the sake of simplicity we omit
mentioning state from the description. The inputs to the machine can be
classified into two categories: program inputs and operand inputs.

The program inputs to an AM are:
\noindent 
\begin{itemize}  
\item $opcode$ is an element of $OPCODES=\{not, and, or, implies, wire\}$  
\item $Q$ is a que, and $Head$ is an index of $Q$
\item $\mathcal{I}_\true$ and  $\mathcal{I}_\false$ are elements of $Intervals$
\item $Mod_{\true}$ and $Mod_{\false}$ are elements of $\{\true, \false\}$
\end{itemize}
The program inputs to an AM change only at the time of programming.
They stay unchanged in all other time steps.

The operand inputs $op_0, op_1$ are elements of $\{\true, \false\}$,
and they can change in every step.\\

The operation of an AM in each step is as follows. The machine first computes $res$:

\begin{equation*}
res=
    \begin{cases}
        \neg op_0 & \text{if } opcode = not \\
        op_0 \land op_1 & \text{if } opcode = and \\
        op_0 \lor op_1  & \text{if } opcode = or \\
        op_0 \implies op_1 & \text{if } opcode = implies \\
        op_0 & \text{if } opcode = wire 
    \end{cases}
\end{equation*}

It then performs a sequence of operations on $Q$:
\begin{itemize}  
\item  if $res = \true$ and $Mod_{\true}=\true$ then 
$add(Q); modify(Q, \mathcal{I}_\true, \true_\maybe); del(Q, Head)$; 
\item  if $res = \false$ and $Mod_{\false}=\true$ then 
$add(Q); modify(Q, \mathcal{I}_\false,
\false_\maybe);del(Q, Head)$; 
\item  if $res = \true$ and $Mod_{\true}=\false$ then $add(Q);del(Q, Head)$; 
\item  if $res = \false$ and $Mod_{\false}=\false$ then $add(Q);del(Q, Head)$; 
\end{itemize}

\subsection{Realization of Evaluator Machines for MTL operators}
\label{sec:realizing_evaluators}
 
Evaluator Machines (EMs) for MTL operators are also the monitors for MTL
formulae of size one. Let the formula be $\phi$ and  
let $\odot$ be  the single operator in it.  EMs have one or two input
streams and one output verdict stream.  At any given time step, the values of
input operands are $\alpha_0$ and $\alpha_1$.  An EM is composed of one or more
AM instantiations and a single Que.  Deleted values from the associated Que
form the stream of monitor verdicts.  At any time instant $i$, the contents of
the $k$-th que cell gives the verdict corresponding to time $(i-k)$. If $k \geq
l(\odot)$, then the verdict is stable (note that $l(\odot)=H(\phi)$ since $|\phi|=1$). 
Else, the verdict may change in future time steps. Thus, the value of $Head$ in the AM
instantiation is always chosen as greater than or equal to $l(\odot)$.

\begin{table}[htbp]
	\scriptsize
	\centering
	\caption{Realizing EMs corresponding to different MTL operators using AMs} \label{tab:em_operations}
	\begin{tabular}{|c||c|c|c|c|c|c|c|c|c|}
		\hline
		 & \multicolumn{9}{c|}{\textbf{AM Programming}} \\
		\cmidrule{2-10}
		\textbf{MTL} & \textbf{opcode} & \textbf{$op_0$} & \textbf{$op_1$} & \textbf{Que} & \textbf{Min. value} & \textbf{$\mathcal{I}_\true$} & \textbf{$\mathcal{I}_\false$} & \textbf{$Mod_\true$} & \textbf{$Mod_\false$} \\
		\textbf{operator} & & & & \textbf{ID} & \textbf{of Head} & & & & \\
		\hline
		\hline
		$\neg \alpha_0$ & not & $\alpha_0$ & - & $Q$ & $1$ & $[0, 0]$ & $[0, 0]$ & $\true$ & $\true$ \\
		\hline
		$\alpha_0 \lor \alpha_1$ & or & $\alpha_0$ & $\alpha_1$ & $Q$ & $1$ & $[0, 0]$ & $[0, 0]$ & $\true$ & $\true$ \\
		\hline
		$\alpha_0 \land \alpha_1$ & and & $\alpha_0$ & $\alpha_1$ & $Q$ & $1$ & $[0, 0]$ & $[0, 0]$ & $\true$ & $\true$ \\
		\hline
		$\alpha_0 \implies \alpha_1$ & implies & $\alpha_0$ & $\alpha_1$ & $Q$ & $1$ & $[0, 0]$ & $[0, 0]$ & $\true$ & $\true$ \\
		\hline
		$\next \alpha_0$ & wire & $\alpha_0$ & - & $Q$ & $2$ & $[1, 1]$ & $[1, 1]$ & $\true$ & $\true$ \\
		\hline
		$\Box_{[t_1, t_2]} \alpha_0$ & wire & $\alpha_0$ & - & $Q$ & $t_2+1$ & $[t_2, t_2]$ & $[t_1, t_2]$ & $\true$ & $\true$ \\
		\hline
		$\diamond_{[t_1, t_2]} \alpha_0$ & wire & $\alpha_0$ & - & $Q$ & $t_2+1$ & $[t_1, t_2]$ & $[t_2, t_2]$ & $\true$ & $\true$ \\
		\hline
		\multirow{3}{*}{$\alpha_0 \until_{[t_1, t_2]} \alpha_1$} & wire & $\alpha_0$ & - & $Q$ & $t_2+1$ & $[0,0]$ & $[0, t_1-1]$ & $\false$ & $\true$ \\
		 & wire & $\alpha_1$ & - & $Q$ & $t_2+1$ & $[t_1, t_2]$ & $[t_2, t_2]$ & $\true$ & $\true$ \\
		 & or & $\alpha_0$ & $\alpha_1$ & $Q$ & $t_2+1$ & $[0,0]$ & $[t_1, t_2-1]$ & $\false$ & $\true$ \\
		\hline
		\multirow{2}{*}{$\alpha_0 \until_{[0, t_2]} \alpha_1$} & or & $\alpha_0$ & $\alpha_1$ & $Q$ & $t_2+1$ & $[0,0]$ & $[0, t_2-1]$ & $\false$ & $\true$ \\
		 & wire & $\alpha_1$ & - & $Q$ & $t_2+1$ & $[0, t_2]$ & $[t_2, t_2]$ & $\true$ & $\true$ \\
		\hline
	\end{tabular}
\end{table}

Table~\ref{tab:em_operations} lists how EMs for different MTL operators can be
realized through AM instantiations.  We design the EM in such a way that it
processes any given input operand only once, enabling us to build a perfectly
streaming monitor. The effect that the input operands in the current time step
have on verdicts corresponding to past time steps is recorded in the result
queue appropriately.

\begin{table}
\vspace{-0.65cm}
\centering
\caption{Example illustrating working of EM for ($ \neg \alpha_0$) \label{tab:example_neg}}
\scriptsize
\bgroup
\def\arraystretch{1.2}
\setlength\tabcolsep{0.1cm}
\begin{tabular}{|c|c|c|c|l|l|c|c|c|c|c|}
  \hline
	\multirow{2}{*}{time} & \multirow{2}{*}{$\alpha_0$} & \multirow{2}{*}{$\alpha_1$} & \multirow{2}{*}{res} & \multicolumn{1}{c|}{Que modification} & \multicolumn{5}{c|}{Que contents} & \multirow{2}{*}{$verdict$} \\
  \cmidrule{6-10}
	& & & & [$<$interval$>$]: $<$value$>$ & & 0 & 1 & 2 & 3 & \\
  \hline
	t = 0 & $\true$  & - & $\false$ &                             & after $add$    & $\maybe$ & & & & \\
	      &          &   &          & [0,0]$:\false_\maybe$       & after $modify$ & $\false$ & & & & \\
	      &          &   &          &                             & after $del$    & $\false$ & & & & \\
  \hline
	t = 1 & $\false$ & - & $\true$  &                             & after $add$    & $\maybe$ & $\false$ & & & \\
	      &          &   &          & [0,0]$:\true_\maybe$        & after $modify$ & $\true$  & $\false$ & & & \\
	      &          &   &          &                             & after $del$    & $\true$  &          & & & $r^0 = \false$ \\
  \hline
	t = 2 & $\false$ & - & $\true$  &                             & after $add$    & $\maybe$ & $\true$  & & & \\
	      &          &   &          & [0,0]$:\true_\maybe$        & after $modify$ & $\true$  & $\true$  & & & \\
	      &          &   &          &                             & after $del$    & $\true$  &          & & & $r^1 = \true$ \\
  \hline
\end{tabular}
\egroup
\end{table}

We demonstrate the realization and working of EMs through two examples.

\begin{example}
$EM$ for $\neg$ is an AM $M=(opcode=\neg, Q, Head=1, \mathcal{I}_\true=[0,0],
	\mathcal{I}_\false=[0,0], Mod_\true=\true, Mod_\false=\true$), as given
	in Table~\ref{tab:em_operations}.  The operation of this machine
	follows from the definition of an AM, and is illustrated in
	Table~\ref{tab:example_neg}. 

	When $\alpha_0 = \true$, we have $res = \false$ (see row for $time=0$
	in Table~\ref{tab:example_neg}).  Since we have $Mod_\false=\true$, we
	do the following  operations on $Q$ : $ add (Q,\maybe);$ $modify(Q,
	[0,0], \false_\maybe);$ $del(Q,1);$.  Since there is nothing at
	position $1$, the $del$ operation does not do anything.

At time=$1$, $\alpha_0 = \false$, and so we have $res = \true$.  Since we have
	$Mod_\true=\true$, we do the following  operations on $Q$ : $ add
	(Q,\maybe);$ $modify(Q, [0,0], \true_\maybe);$ $del(Q,1);$.  The
	deleted value forms the verdict corresponding to time
	$(current\_time-Head)=(1-1)=0$, that is, $r^0 = \false$ (we denote the
	verdict corresponding to time $i$ by $r^i$). 
	
	Similarly, at time=$2$, the verdict corresponding to time $1$
	($r^1=\true$) is output by the EM. We thus get a new verdict every
	time step.
\end{example}

\begin{table}
\centering
\caption{Example illustrating working of EM for ($\alpha_0 \until_{[1,2]} \alpha_1$) \label{tab:example_until}}
\scriptsize
\bgroup
\def\arraystretch{1.2}
\setlength\tabcolsep{0.1cm}
\begin{tabular}{|c|c|c|c|l|l|c|c|c|c|c|}
  \hline
	\multirow{2}{*}{time} & \multirow{2}{*}{$\alpha_0$} & \multirow{2}{*}{$\alpha_1$} & \multirow{2}{*}{res} & \multicolumn{1}{c|}{Que modification} & \multicolumn{5}{c|}{Que contents} & \multirow{2}{*}{$verdict$} \\
  \cmidrule{6-10}
	& & & & [$<$interval$>$]: $<$value$>$ & & 0 & 1 & 2 & 3 & \\
  \hline
	t = 0 & $\false$ & $\false$ & $\false$, $\false$, $\false$ &                                                                                        & after $add$    & $\maybe$ & & & & \\
	      &          &          &                              & [0,0]$:\false_\maybe$, [2,2]$:\false_\maybe$, [1,1]$:\false_\maybe$                    & after $modify$ & $\false$ & & & & \\
	      &          &          &                              &                                                                                        & after $del$    & $\false$ & & & & \\
  \hline
	t = 1 & $\true$  & $\false$ & $\true$, $\false$, $\true$   &                                                                                        & after $add$    & $\maybe$ & $\false$ & & & \\
	      &          &          &                              & [2,2]$:\false_\maybe$                                                                  & after $modify$ & $\maybe$ & $\false$ & & & \\
	      &          &          &                              &                                                                                        & after $del$    & $\maybe$ & $\false$ & & & \\
  \hline
	t = 2 & $\true$  & $\false$ & $\true$, $\false$, $\true$   &                                                                                        & after $add$    & $\maybe$ & $\maybe$ & $\false$ & & \\
	      &          &          &                              & [2,2]$:\false_\maybe$                                                                  & after $modify$ & $\maybe$ & $\maybe$ & $\false$ & & \\
	      &          &          &                              &                                                                                        & after $del$    & $\maybe$ & $\maybe$ & $\false$ & & \\
  \hline
	t = 3 & $\false$ & $\true$  & $\false$, $\true$, $\true$   &                                                                                        & after $add$    & $\maybe$ & $\maybe$ & $\maybe$ & $\false$ & \\
	      &          &          &                              & [0,0]$:\false_\maybe$, [1,2]$:\true_\maybe$                                            & after $modify$ & $\false$ & $\true$  & $\true$  & $\false$ & \\
	      &          &          &                              &                                                                                        & after $del$    & $\false$ & $\true$  & $\true$  &          & $r^0=\false$ \\
  \hline
	t = 4 & $\true$  & $\true$  & $\true$, $\true$, $\true$    &                                                                                        & after $add$    & $\maybe$ & $\false$ & $\true$  & $\true$ & \\
	      &          &          &                              & [1,2]$:\true_\maybe$                                                                   & after $modify$ & $\maybe$ & $\false$ & $\true$  & $\true$ & \\
	      &          &          &                              &                                                                                        & after $del$    & $\maybe$ & $\false$ & $\true$  &          & $r^1=\true$ \\
  \hline
\end{tabular}
\egroup
\end{table}

\begin{example}
	$EM$ for $\until_{[1,2]}$ is a set of three AMs all operating on the
	same Que $Q$ as given in Table~\ref{tab:em_operations}.  The operation
	of this machine follows from the definition of AM, and is illustrated
	in Table~\ref{tab:example_until}.

At each time step, each of the three AMs perform the given operation ($(wire\;
	\alpha_0)$, $(wire\; \alpha_1)$, and $(\alpha_0\; or\; \alpha_1)$
	respectively).  The three results obtained are indicated in the $res$
	column in Table~\ref{tab:example_until}. In time step $0$, the first AM
	computes $res=\false$ and since its $Mod_\false=\true$, it modifies 
	$Q$ cells with indices in the range $[0, (t_1-1)]=[0,0]$ to $\false$,
	since the cell's old value was $\maybe$. Similarly, the second AM
	attempts to modify cell $2$ to $\false$ (if $\maybe$), and the third AM
	attempts to modify cell $1$ to $\false$ (if $\maybe$). But since these
	cells are empty, there is no change to $Q$.

	In time step $1$, the first and third AMs do not modify $Q$ since their
	$res$ values are $\true$, and their $Mod_\true$ values are $\false$.
	However, the second AM computes $res=\false$, and since its
	$Mod_\false=\true$, it attempts to modify the $Q$ cell $t_2=2$ to
	$\false$ (if $\maybe$). However, since this cell is empty, $Q$ remains
	unmodified.  The updates to $Q$ progress in a similar fashion in
	subsequent time steps.

	In time step $3$, the verdict corresponding to time
	$(current\_time-Head)$$=0$, that is, $r^0 = \false$ is output by
	the EM. Similarly, in time step $4$, the verdict corresponding to time
	$1$ ($r^1=\true$) is output. 
\end{example}

\subsubsection{Correctness of Evaluator Machines}
The expectation is that at any time $i$, the verdict for single-operator
formula $\phi$ corresponding to time $i-k$ is present in cell $Q^i[k]$, for all
$l(\odot) \leq k \leq i$ (assuming $k < Head$).  We describe this correctness
requirement through two theorems. Theorem~\ref{lab-correctness-thm} captures
that an Evaluator Machine correctly evaluates the verdict of $\phi$
corresponding to time $i-l(\odot)$, in cell $Q^i[l(\odot)]$, at time $i$.
Theorem~\ref{lab-correctness-thm2} captures that once a verdict is given, it is
not altered as long as it is present in the $Q$. Proofs of
Theorem~\ref{lab-correctness-thm} and  Theorem~\ref{lab-correctness-thm2}, are
given in the Appendix.

\begin{theorem}       
Let $\phi$ be a single-operator MTL formula and $\odot$ be the operator in it. 
Then in an EM for $\phi$, for all time instances $i$, with $i \geq l(\odot)$,  
\begin{center} 
	{\em if} $x^{i-l(\odot)} \entails \phi${\em, then} $Q^{i}[l(\odot)]=\true$.  
\end{center} 
\begin{center} 
  {\em if} $x^{i-l(\odot)} \doesnotentail \phi${\em, then} $Q^{i}[l(\odot)]=\false$.  
\end{center} 
\label{lab-correctness-thm} 
\end{theorem}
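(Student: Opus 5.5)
The proof is a case analysis over the operator rows of Table~\ref{tab:em_operations}, built on one bookkeeping invariant. The invariant is that the cell holding the verdict for time $j$ is created by the $add$ at time $j$, at index $0$. Since $Head \geq l(\odot)$, the cell is never deleted before index $Head$, so at time $i$ it sits at index $i-j$. Hence an AM modification at time $i$ on index $k$ concerns the verdict for time $i-k$. Each cell also starts as $\maybe$, and a conditional $modify$ only rewrites a $\maybe$. So the content of $Q^{i}[k]$ is the value written by the \emph{first} modification that touches it in the times $i-k, \dots, i$. When several AMs share one Que in a step, the first-touching modification is decided by the order in which the AMs act, as I state below. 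The theorem therefore reduces to two claims about the cell for time $j=i-l(\odot)$: (a) every modification that reaches it during $[j, j+l(\odot)]$ writes the correct value, or at least the first one does; and (b) some modification reaches it by time $j+l(\odot)$.

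\textbf{Boolean operators and $\next$.} For $\neg,\lor,\land,\implies$ we have $l=1$, and both intervals are $[0,0]$ with both $Mod$ flags true. The cell for time $j$ is therefore written at time $j$ with $res$, which is exactly the truth value of $\phi$ at $j$. It is never touched again, since later writes affect only index $0$, so $Q^{j+1}[1]$ is correct. For $\next$, the wire AM writes index $1$ at time $j+1$ with $\alpha_0$ at $j+1$. The cell for time $j$ sits at index $1$ at that moment, and this is its only write. Its value then stays unchanged at index $2 = l(\next)$ at time $j+2$.

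\textbf{$\Box$ and $\diamond$.} Take $\Box_{[t_1,t_2]}$. At time $j+m$ the cell for time $j$ is at index $m$. It is set to $\false$ at the first $m\in[t_1,t_2]$ with $\alpha_0$ false at $j+m$; otherwise it is set to $\true$ at $m=t_2$, when $\alpha_0$ holds at $j+t_2$. If some earlier position in the window is false, the cell was already $\false$ before step $t_2$, and that $\false$ survives. Thus at $m=t_2$ the cell is $\false$ iff some $k\in[j+t_1,j+t_2]$ has $\neg\alpha_0$, which matches the semantics of $\Box$. It is then stable at index $t_2+1$. The case of $\diamond$ is dual.

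\textbf{$\until$.} This is the main obstacle. Three AMs act on one Que, so I must fix the order in which their modifications apply within a step; I take the table's listed order. I then verify for the cell at index $m=i-j$ the following. The first AM (wire $\alpha_0$, $\false$ on $[0,t_1-1]$) kills the cell if $\alpha_0$ fails somewhere in $[j,j+t_1-1]$. The second AM writes $\true$ at the first $m\in[t_1,t_2]$ with $\alpha_1$ at $j+m$, and writes $\false$ at $m=t_2$ if $\alpha_1$ fails there. The third AM (or, $\false$ on $[t_1,t_2-1]$) kills the cell at the first $m\in[t_1,t_2-1]$ with both $\alpha_0$ and $\alpha_1$ false.

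I would prove by induction on $m$ that after step $m$ the cell is $\maybe$ iff $\alpha_0$ holds on $[j,j+m]$ and $\alpha_1$ fails on $[j+t_1,j+m]$. Moreover, if the cell is decided, its value equals the $\until$ verdict. The key points in the induction are these. A $\true$ written at $m$ means $\alpha_1$ holds at $j+m$ with $\alpha_0$ holding on $[j, j+m)$, which is a witness. A $\false$ from the or-AM means $\alpha_0$ fails before any witness occurs. At $m=t_2$, an undecided cell becomes $\false$ exactly when no witness exists. The $[0,t_2]$ variant, which removes the degenerate interval $[0,t_1-1]$, goes the same way. The delicate part is the same-step conflict where both AMs write to the same cell. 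At $m\in[t_1,t_2-1]$, the $\true$ from the $\alpha_1$ AM and the $\false$ from the or-AM cannot both fire, because the or-AM fires only when $\alpha_1$ is false. At $m=t_2$, only the second AM acts on the cell. Checking these exclusions carefully is where I expect to spend the effort.
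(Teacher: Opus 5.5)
Your proof is correct, and it mostly follows the paper's strategy: a per-operator case analysis that tracks the cell created at time $j$ as it shifts toward index $l(\odot)$. It relies on the fact that a conditional $modify$ only overwrites $\maybe$, so the first write sticks. For the Boolean operators, $\next$, $\Box$ and $\diamond$ your argument is essentially the paper's.

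For $\until$ the routes differ. The paper fixes the earliest witness in the $\true$ case and splits the $\false$ case into three semantic scenarios. You instead carry an inductive invariant on $m$ that characterizes when the cell is still $\maybe$. You also check the same-step exclusions inside the induction; the paper leaves those to Proposition~\ref{prop:simul_modify}.

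Your version buys completeness. The paper's three scenarios miss one $\false$ case: $\alpha_0$ holds on $[j,j+t_1-1]$, $\alpha_1$ never holds in $[j+t_1,j+t_2]$, and $\alpha_0$ fails somewhere in $[j+t_1,j+t_2]$. Your invariant handles this case automatically: either the or-AM or the final write at index $t_2$ decides the cell.

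One fix is needed. State the invariant only for $m\le t_2-1$. After step $t_2$ the second AM always decides the cell, so it is never $\maybe$ there, even when $\alpha_0$ holds throughout and $\alpha_1$ fails. The closing step $m=t_2$ then works exactly as you describe.
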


\begin{theorem}
Let $\phi$ be a single-operator MTL formula and $\odot$ be the operator in it. 
Then in an EM for $\phi$,
	for all $j$ such that $j \geq l(\odot)$,
	and for all $k$ such that $1 \leq k \leq (Head-l(\odot))$,
	$Q^{(j+k)}[l(\odot) + k] = Q^j[l(\odot)]$.
\label{lab-correctness-thm2}
\end{theorem}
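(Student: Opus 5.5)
The plan is to prove Theorem~\ref{lab-correctness-thm2} by induction on $k$, tracking how a single cell moves through the Que during one time step. The key structural fact is this: in every row of Table~\ref{tab:em_operations}, every modification interval $\mathcal{I}_\true$, $\mathcal{I}_\false$ lies inside $[0, l(\odot)-1]$. I would check this row by row.
\begin{itemize}
\item For $\neg,\lor,\land,\implies$ the intervals are $[0,0]$ and $l=1$.
\item For $\next$ the interval is $[1,1]$ and $l=2$.
\item For $\Box_{[t_1,t_2]}$, $\diamond_{[t_1,t_2]}$ and both variants of $\until_{[t_1,t_2]}$, every interval has right endpoint at most $t_2$, while $l=t_2+1$.
\end{itemize}
Consequently no $modify$ can ever touch a cell of index $\ge l(\odot)$.

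For the inductive step, fix $j\ge l(\odot)$ and assume $Q^{j+k-1}[l(\odot)+k-1]=Q^j[l(\odot)]$, where $1\le k\le Head-l(\odot)$. The base case $k-1=0$ is trivial. Now follow the operations at time $j+k$ in their prescribed order.

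First, $add(Q)$ shifts every cell one index higher. So immediately after it, the cell at index $l(\odot)+k$ holds $Q^{j+k-1}[l(\odot)+k-1]$.

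Second, the $modify$ operations issued by the AM, or by all AMs of the EM in the $\until$ case, act only on indices in $[0,l(\odot)-1]$. Since $l(\odot)+k>l(\odot)-1$, this cell is untouched. I would also note that $modify$ only rewrites cells holding $\maybe$. By Theorem~\ref{lab-correctness-thm} the value $Q^j[l(\odot)]$ is already $\true$ or $\false$, so it is immune to modification anyway. This gives a second, independent reason the value survives.

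Third, $del(Q,Head)$ removes only the element at index $Head$. Deletion can only renumber cells above $Head$, so a cell at index $l(\odot)+k<Head$ keeps both its index and its value.

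Combining the three operations yields $Q^{j+k}[l(\odot)+k]=Q^{j+k-1}[l(\odot)+k-1]=Q^j[l(\odot)]$.

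I expect the main obstacle to be making the conventions precise rather than any mathematical difficulty. There are two points to settle.
\begin{itemize}
\item The multi-AM EMs for $\until$ share one Que. I will fix the convention, implicit in Table~\ref{tab:example_until}, that in each step a single $add$ and a single $del$ are performed on the shared $Q$, with all the AMs' $modify$ operations applied between them. The argument only uses the fact that every individual modify interval is below $l(\odot)$, so the order of the modifies is irrelevant.
\item The boundary case $l(\odot)+k=Head$. Here the cell is exactly the one removed by $del$. I will read $Q^{j+k}[Head]$ as the value held at index $Head$ just before deletion, i.e.\ the emitted verdict. The chain of equalities then still holds, because the $add$ and $modify$ steps are analysed exactly as above. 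This also shows that the emitted verdict at time $j+Head-l(\odot)$ equals the stable value computed at time $j$.
\end{itemize}
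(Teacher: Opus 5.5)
Your proposal is correct and follows essentially the paper's argument: after time $j$ the value in cell $l(\odot)$ is only shifted by the $k$ subsequent $add$ operations, no $modify$ interval reaches those higher indices, and $del$ does not remove the value before it reaches $Head$. The difference is that you make explicit what the paper's one-line proof leaves implicit: the row-by-row check of the intervals in Table~\ref{tab:em_operations}, the single-$add$/single-$del$ convention for the shared Que of the Until EM, and reading the cell at index $Head$ before deletion.
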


Since we use $Head \geq l(\odot)$, by Theorem~\ref{lab-correctness-thm} and 
Theorem~\ref{lab-correctness-thm2}, we get the following corollary establishing 
correctness of Evaluator Machines.  
\begin{corollary}
For a given MTL formula $\phi$, and an input word $x$, 
the stream of deleted values from the Que gives the stream of verdicts
$x'$ such that $x'(i)=\true$ iff $x^i \entails \phi$.
\end{corollary}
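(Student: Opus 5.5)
The corollary follows by combining Theorem~\ref{lab-correctness-thm} and Theorem~\ref{lab-correctness-thm2} with the bookkeeping of what $del(Q, Head)$ removes. There are three steps.

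\textbf{Step 1: what is deleted, and when.} In every time step, each AM of the EM performs exactly one $add(Q)$, then possibly a $modify$, then $del(Q, Head)$. The $add$ shifts every cell one index up. So the entry created at time $j$ sits at index $0$ of $Q^j$, and at index $k$ of $Q^{j+k}$ as long as it has not been deleted. The $modify$ operations only rewrite contents, never positions. Hence the element removed by $del(Q, Head)$ at time $i \geq Head$ is exactly the entry created at time $i - Head$. Its value is the content of index $Head$ after this step's $add$ and $modify$, which is the same as the content of index $Head-1$ at the end of step $i-1$, i.e. $Q^{i-1}[Head-1]$.

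For the three-AM EM of $\until$, I will first fix the convention that the three AMs jointly perform a single $add$ and a single $del$ per time step on the shared Que, with their $modify$ operations applied in between. This is what Table~\ref{tab:example_until} shows. One must also check that the $modify$ operations commute. They do, because each only turns $\maybe$ into a definite value, and at the same index within one step the three AMs never try to write conflicting values from $\maybe$ under a correct input. The latter should already be implicit in the proof of Theorem~\ref{lab-correctness-thm}.

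\textbf{Step 2: link the deleted value to a stable verdict.} Define the output stream by $x'(i - Head) :=$ the value deleted at time $i$. Set $j = i - 1 - (Head - 1 - l(\odot))$ and $k = Head - 1 - l(\odot)$. Since $Head \geq l(\odot)$, we have $0 \leq k \leq Head - l(\odot)$.
\begin{itemize}
\item If $k \geq 1$, Theorem~\ref{lab-correctness-thm2} gives $Q^{j+k}[l(\odot) + k] = Q^{j}[l(\odot)]$, that is, $Q^{i-1}[Head-1] = Q^{j}[l(\odot)]$.
\item If $k = 0$, this identity holds trivially.
\item The remaining edge case is $Head = l(\odot)$, where $k = -1$. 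Here the deleted value is read after this step's $modify$, so I would handle it by applying Theorem~\ref{lab-correctness-thm} directly at time $i$ to the index $l(\odot)$ just before the $del$. To make this clean, I would first restate the convention that $Q^i$ denotes the contents after $add$ and $modify$ but before $del$, and check that this matches the use of $Q^i[l(\odot)]$ in Theorem~\ref{lab-correctness-thm}.
\end{itemize}

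\textbf{Step 3: apply Theorem~\ref{lab-correctness-thm}.} Note that $j - l(\odot) = i - Head$. Theorem~\ref{lab-correctness-thm} at time $j \geq l(\odot)$ gives $Q^{j}[l(\odot)] = \true$ iff $x^{i-Head} \entails \phi$, and $\false$ otherwise. In particular the deleted value is never $\maybe$. Reindexing with $n = i - Head$, which ranges over all of $\mathbb{N}_0$ as $i$ ranges over $i \geq Head$, gives $x'(n) = \true$ iff $x^n \entails \phi$, which is the corollary.

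\textbf{Main obstacle.} The only real difficulty is the off-by-one bookkeeping between the state seen by Theorems~\ref{lab-correctness-thm} and~\ref{lab-correctness-thm2} and the moment $del$ reads the cell. This includes the boundary case $Head = l(\odot)$ and the shared-Que semantics for $\until$. Once those conventions are pinned down, the argument is a direct chaining of the two theorems.
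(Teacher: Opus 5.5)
Your proposal is correct and is essentially the paper's argument: the paper simply chains Theorem~\ref{lab-correctness-thm} and Theorem~\ref{lab-correctness-thm2} using $Head \geq l(\odot)$, and your version adds the deletion bookkeeping that the paper leaves implicit. You can simplify it by adopting from the outset the paper's convention that $Q^i$ is the state after $add$ and $modify$ but before $del$: then the deleted value at time $i$ is $Q^i[Head]$, and Theorem~\ref{lab-correctness-thm2} with $k = Head - l(\odot)$ (or Theorem~\ref{lab-correctness-thm} directly when $Head = l(\odot)$) finishes at once, which removes both the spurious $k=-1$ case and your slip that $0 \leq k$ always holds.
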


\begin{proposition}
	\label{prop:simul_modify}
	In the EM for $\until_{[t_1,t_2]}$ operator, the three AMs
	that modify a single que $Q$, do not simultaneously (in the same time step) modify the same cells of $Q$.
\end{proposition}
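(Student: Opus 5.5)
The plan is to compute, for each AM in the $\until_{[t_1,t_2]}$ EM of Table~\ref{tab:em_operations}, its \emph{effective write set}. This is the set of Que indices that its $modify$ operation may touch in a given time step, as a function of the operand values $\alpha_0,\alpha_1$ in that step. I would then check pairwise that these sets are disjoint for every one of the four combinations of $(\alpha_0,\alpha_1)$. By the definition of an AM, a $modify$ is issued only when $res=\true$ with $Mod_\true=\true$ (touching $\mathcal{I}_\true$), or when $res=\false$ with $Mod_\false=\true$ (touching $\mathcal{I}_\false$). Otherwise the AM performs only $add$ and $del$, which are common bookkeeping and not cell modifications. So the write set is either $\mathcal{I}_\true$, $\mathcal{I}_\false$, or $\emptyset$.

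First, reading off the table for the case $t_1\geq 1$:
\begin{itemize}
\item The first AM (wire $\alpha_0$) has $Mod_\true=\false$. Its write set is $[0,t_1-1]$ if $\alpha_0=\false$, and $\emptyset$ otherwise.
\item The second AM (wire $\alpha_1$) writes $[t_1,t_2]$ if $\alpha_1=\true$, and $\{t_2\}$ if $\alpha_1=\false$.
\item The third AM (or) has $Mod_\true=\false$. It writes $[t_1,t_2-1]$ exactly when $\alpha_0\lor\alpha_1=\false$, i.e.\ when $\alpha_0=\alpha_1=\false$, and writes $\emptyset$ otherwise.
\end{itemize}

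Second, the pairwise comparisons.
\begin{itemize}
\item First AM versus the other two: $[0,t_1-1]$ lies entirely below $t_1$, while both other AMs only ever write indices in $[t_1,t_2]$. These sets are disjoint independently of the operand values.
\item Second AM versus third AM: this is the only pair whose intervals can overlap, namely $[t_1,t_2]\supseteq[t_1,t_2-1]$. The third AM writes only when $\alpha_1=\false$, and in exactly that step the second AM writes only $\{t_2\}$, which is disjoint from $[t_1,t_2-1]$. When $\alpha_1=\true$, the third AM's $res$ is $\true$ and it writes nothing.
\end{itemize}
So the potential conflict is excluded by correlating the two AMs' results through the shared operand $\alpha_1$. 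This correlation is the one non-trivial step. Degenerate intervals cause no trouble: if $t_1=t_2$, the third AM's interval $[t_1,t_1-1]$ is empty.

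Third, I would treat the variant $\alpha_0\until_{[0,t_2]}\alpha_1$ from the last rows of the table in the same way, since it is also an EM for the until operator. The or AM writes $[0,t_2-1]$ only when $\alpha_0=\alpha_1=\false$. In that step the wire-$\alpha_1$ AM has $res=\false$ and writes only $\{t_2\}$. When $\alpha_1=\true$ the wire AM writes $[0,t_2]$, but then the or AM writes nothing. Hence the write sets are again disjoint in every step.

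The main obstacle is mostly one of precision rather than difficulty. The point to make explicit is that "modify the same cells" refers to the intervals on which a $modify$ is actually issued in that step, not to the full static intervals $\mathcal{I}_\true,\mathcal{I}_\false$ listed in the table. Those static intervals do overlap: for instance, the second AM's $\mathcal{I}_\true$ contains the third AM's $\mathcal{I}_\false$. So the argument must go through the case analysis on $(\alpha_0,\alpha_1)$ rather than through the intervals alone.
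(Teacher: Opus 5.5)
Your proof is correct and follows essentially the same route as the paper's. Both list the intervals each AM actually modifies, observe that the only cross-machine overlap is between the second AM's $\mathcal{I}_\true=[t_1,t_2]$ and the third AM's $\mathcal{I}_\false=[t_1,t_2-1]$, and rule that overlap out because the former requires $\alpha_1=\true$ while the latter requires $\alpha_0\lor\alpha_1=\false$. Your version is somewhat more explicit than the paper's: it checks $\{t_2\}$ against $[t_1,t_2-1]$, excludes the intervals with $Mod=\false$, and works through the $\until_{[0,t_2]}$ variant, but the underlying argument is the same.
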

The same holds for the $\until_{[0,t_2]}$ operator as well. The proof is given in the Appendix. \\
 
It bears noting that all AMs are homogeneous, and the EM for
any MTL operator can be realized using any of the AMs. This high flexibility is
in contrast to the typical FPGA approaches that implement a pre-defined number
of instances of each MTL operator, and program the connectivity between them at
runtime to change the property being monitored. This FPGA approach limits the
number of MTL formulae that can be captured.

We describe realization of monitors for MTL formulae of size greater
than $1$ next.  

\subsection{Monitors for MTL Formulae}
\label{sec:em_composing}

To evaluate any given MTL formula, we construct its abstract syntax tree (AST).
For each operator in the formula, a different EM is given the responsibility
for its evaluation.  The outputs of some EMs are connected to the inputs of
other EMs according to the AST.  If an operator $x$ is a child of an operator
$y$ in the AST, then the EM ($EM_y$) corresponding to $y$, must read its source
operand from the queue of the EM ($EM_x$) corresponding to $x$.  Additionally,
$EM_y$ must read from the correct position in the queue  of $EM_x$ such that
functional correctness is maintained.  Which means the $Head$ of $EM_x$ needs
to be set correctly.  
 
Please note that functions $l$ and $H$ are the same for single
operators.  By Theorem~\ref{lab-correctness-thm}, the value of $Head$
for every $EM_x$ should be at least $l(x)$. However, there are
scenarios where $Head_x$ must be greater than $l(x)$ to maintain
functional correctness, as illustrated in
Example~\ref{example:balancing}.

\begin{figure}[h!]
	\centering
	\begin{tabular}{c}
		\begin{minipage}{0.95\textwidth}
			\centering
			\includegraphics[width=0.9\textwidth]{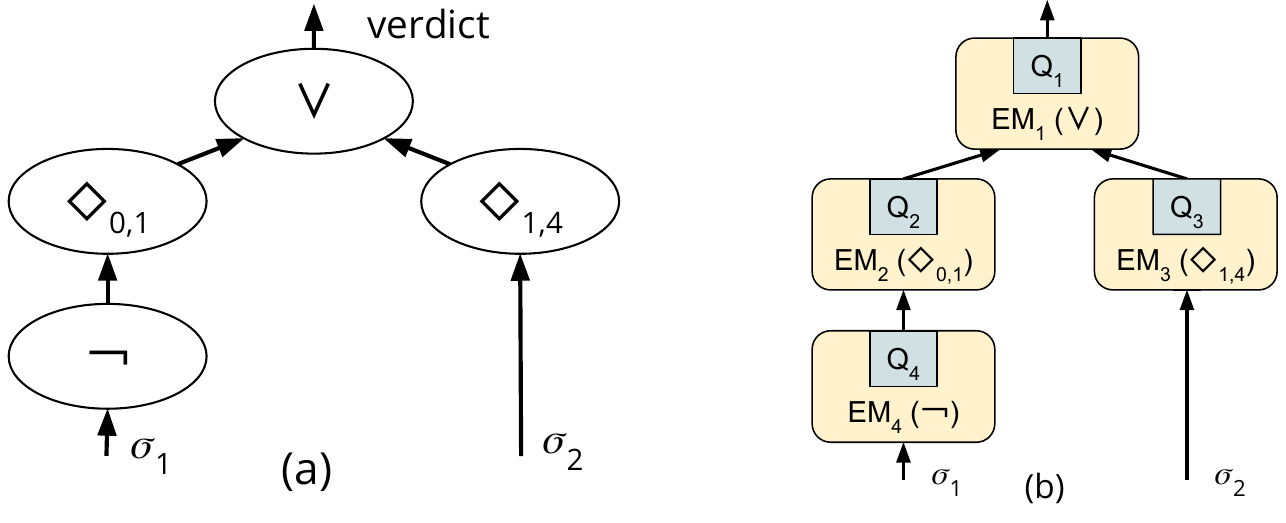}
			\caption{AST and EMs for $\diamond_{[0, 1]}\neg\sigma_1 \lor \diamond_{[1,4]}\sigma_2$} \label{fig:am_example}
		\end{minipage}
		\\
		\begin{minipage}{0.95\textwidth}
			\centering
			\begin{tabular}{|c|c|c|c|c|c|c|c|c||c|c|c|c|c|c|c|c|c|}
				\hline
				\multicolumn{9}{|c||}{\textbf{Scenario 1: $Head_2=2$}} & \multicolumn{9}{c|}{\textbf{Scenario 2: $Head_2=3$}}\\
				\hline
				time  & 0 & 1 & 2 & 3 & 4 & 5 & 6 & 7 &
          time & 0 & 1 & 2 & 3 & 4 & 5 & 6 & 7 \\
				\hline
				$Q_4[0]$ & $r_4^0$ & $r_4^1$ & $r_4^2$ & $r_4^3$ & $r_4^4$ & $r_4^5$ & $r_4^6$ & $r_4^7$ &
           $Q_4[0]$ & $r_4^0$ & $r_4^1$ & $r_4^2$ & $r_4^3$ & $r_4^4$ & $r_4^5$ & $r_4^6$ & $r_4^7$ \\
				\hline
				$Q_4[1]$ & & $r_4^0$ & $r_4^1$ & $r_4^2$ & $r_4^3$ & $r_4^4$ & $r_4^5$ & $r_4^6$ &
				$Q_4[1]$ & & $r_4^0$ & $r_4^1$ & $r_4^2$ & $r_4^3$ & $r_4^4$ & $r_4^5$ & $r_4^6$ \\
				\hline
				$Q_2[0]$ & &         & $r_2^0$ & $r_2^1$ & $r_2^2$ & $r_2^3$ & $r_2^4$ & $r_2^5$ &
				$Q_2[0]$ & &         & $r_2^0$ & $r_2^1$ & $r_2^2$ & $r_2^3$ & $r_2^4$ & $r_2^5$ \\
				\hline
				$Q_2[1]$ & &         &         & $r_2^0$ & $r_2^1$ & $r_2^2$ & $r_2^3$ & $r_2^4$ &
                                $Q_2[1]$ & &         &         & $r_2^0$ & $r_2^1$ & $r_2^2$ & $r_2^3$ & $r_2^4$ \\
				\hline
				$Q_2[2]$ & &         &         &         & $r_2^0$ & \cellcolor{red!40}$r_2^1$ & \cellcolor{red!25}$r_2^2$ & \cellcolor{red!25}$r_2^3$ &
                                $Q_2[2]$ & &         &         &         & $r_2^0$ & $r_2^1$ & $r_2^2$ & $r_2^3$\\
				\hline
          & & & & & & & & &
				$Q_2[3]$ & &         &         &         &         & \cellcolor{green!40}$r_2^0$ & \cellcolor{green!25}$r_2^1$ & \cellcolor{green!25}$r_2^2$ \\
				\hline
				$Q_3[0]$ & $r_3^0$ & $r_3^1$ & $r_3^2$ & $r_3^3$ & $r_3^4$ & $r_3^5$ & $r_3^6$ & $r_3^7$ &
           $Q_3[0]$ & $r_3^0$ & $r_3^1$ & $r_3^2$ & $r_3^3$ & $r_3^4$ & $r_3^5$ & $r_3^6$ & $r_3^7$ \\
				\hline
				$Q_3[1]$ & & $r_3^0$ & $r_3^1$ & $r_3^2$ & $r_3^3$ & $r_3^4$ & $r_3^5$ & $r_3^6$ &
           $Q_3[2]$ & & $r_3^0$ & $r_3^1$ & $r_3^2$ & $r_3^3$ & $r_3^4$ & $r_3^5$ & $r_3^6$ \\
				\hline
				$Q_3[2]$ & & & $r_3^0$ & $r_3^1$ & $r_3^2$ & $r_3^3$ & $r_3^4$ & $r_3^5$ &
           $Q_3[2]$ & & & $r_3^0$ & $r_3^1$ & $r_3^2$ & $r_3^3$ & $r_3^4$ & $r_3^5$ \\
				\hline
				$Q_3[3]$ & & & & $r_3^0$ & $r_3^1$ & $r_3^2$ & $r_3^3$ & $r_3^4$ &
           $Q_3[3]$ & & & & $r_3^0$ & $r_3^1$ & $r_3^2$ & $r_3^3$ & $r_3^4$ \\
				\hline
				$Q_3[4]$ & & & & & $r_3^0$ & $r_3^1$ & $r_3^2$ & $r_3^3$ &
           $Q_3[4]$ & & & & & $r_3^0$ & $r_3^1$ & $r_3^2$ & $r_3^3$ \\
				\hline
				$Q_3[5]$ & & & & & & \cellcolor{red!40}$r_3^0$ & \cellcolor{red!25}$r_3^1$ & \cellcolor{red!25}$r_3^2$ &
                                $Q_3[5]$ & & & & & & \cellcolor{green!40}$r_3^0$ & \cellcolor{green!25}$r_3^1$ & \cellcolor{green!25}$r_3^2$\\
				\hline
				$Q_1[0]$ & & & & & & & \cellcolor{red!40}$r_1^0$ & \cellcolor{red!25}$r_1^1$ &
                                $Q_1[0]$ & & & & & & & \cellcolor{green!40}$r_1^0$ & \cellcolor{green!25}$r_1^1$ \\
				\hline
				$Q_1[1]$ & & & & & & & & \cellcolor{red!40}$r_1^0$ &
				$Q_1[1]$ & & & & & & & & \cellcolor{green!40}$r_1^0$ \\
				\hline
				\hline
				\multicolumn{18}{|c|}{$r_i^j$: verdict
produced by $EM_i$ corresponding to time $j$} \\
				\hline
			\end{tabular}
			\captionof{table}{Example motivating the importance of $Head$ value choice} \label{tab:am_example_timeline}
		\end{minipage}
		\\
	\end{tabular}
\end{figure}

\begin{example}
\label{example:balancing}
Consider the formula $\diamond_{[0, 1]}\neg\sigma_1 \lor \diamond_{[1, 4]}\sigma_2$. 
The AST corresponding to this formula is given in Figure~\ref{fig:am_example}(a) and 
the allocation of operators to EMs is given in Figure~\ref{fig:am_example}(b). 
 
Table~\ref{tab:am_example_timeline} shows the values in different queue cells
	in two scenarios. $r^j_i$ denotes the verdict computed by $EM_i$
	corresponding to time $j$. In the first scenario, $Head$ of every $EM$
	is set to the $l()$ value of the corresponding operator. This gives
	$Head_1=1$, $Head_{2}=2$, $Head_{3}=5$, and $Head_{4}=1$.  In time step
	$2$, $EM_2$ takes in the verdict $r^0_4$ from cell $Q^1_4[1]$ and
	begins its computation of verdict $r^0_2$ in cell $Q^2_2[0]$.
	Likewise, in time step $6$, $EM_1$ performs the $\lor$ operation on the
	contents of cells $Q_2^5[2]$ and $Q_3^5[5]$. However, these
	respectively correspond to the verdict of $EM_2$ corresponding to time
	$1$ ($r_2^1$) and the verdict of $EM_3$ corresponding to time $0$
	($r_3^0$). This difference in time is functionally incorrect, and a
	similar mismatch in the time of the source operands of $EM_1$ is seen
	in every subsequent cycle.

	We delve deeper to find the cause of this functional incorrectness. As
	seen in the illustration in Table~\ref{tab:am_example_timeline}, $EM_4$
	takes $2$ time steps to produce a verdict (verdict passes through
	$Q_4[0]$ and $Q_4[1]$).  Likewise, $EM_2$ takes $3$ time steps and
	$EM_3$ takes $6$ time steps. This means that the verdict of $EM_1$'s
	left-subtree corresponding to time $i$ is available at $EM_1$ at time
	$i+2+3=i+5$, while the right-subtree verdict is available at time
	$i+6$. This mismatch causes the functional incorrectness.

	We can correct this by setting $Head_{2}=3$, as done in the second
	scenario in Table~\ref{tab:am_example_timeline}. Now in time $6$,
	$Q_2^5[3]$ contains $r_2^0$ (unchanged from $Q_2^4[2]$, as per
	Theorem~\ref{lab-correctness-thm2}), thereby resulting in functional
	correctness. We see that in every cycle, $EM_1$ receives operands
	corresponding to the same time as required.
\end{example}

    \begin{algorithm}[hbt]
        \caption{Computing $Head$s of EMs in the monitor}
        \label{alg:headcompute}
	\scriptsize
        \begin{algorithmic}[1] 
            \Function{ComputeHead}{$EM_{cur}$ which evaluates MTL operator $\sigma$}
                \State $Head_{cur} \gets l(\sigma)$
                \If{$EM_{cur}$ is leaf node}
		   \State $Height_{cur}=Head_{cur}+1$

                \ElsIf{$\sigma$ is a unary operator}
                   \State $(Head_{chld}, Height_{chld}) \gets \Call{ComputeHead}{Child~of~EM}$
		   \State $Height_{cur} \gets Head_{cur} + 1 + Height_{chld}$

                \ElsIf{$\sigma$ is a binary operator}
                   \State $(Head_{lchld}, Height_{lchld}) \gets \Call{ComputeHead}{Left~Child~of~EM}$
                   \State $(Head_{rchld}, Height_{rchld}) \gets
\Call{ComputeHead}{Right~Child~of~EM}$
	                  \If {$Height_{lchld}=Height_{rchld}$} 
				\State $Height_{cur} \gets Head_{cur} + 1 +
Height_{lchld}$
 
	                  \ElsIf {$Height_{lchld} < Height_{rchld}$} 
				\State $Height_{cur} \gets Head_{cur} + 1 + Height_{rchld}$
				\State $Head_{lchld} \gets
Head_{lchld} + Height_{rchld} - Height_{lchld}$
                	   	\State {$Height_{lchld} \gets Height_{rchld}$} 
 
 	                  \ElsIf {$Height_{rchld} < Height_{lchld}$} 
				\State $Height_{cur} \gets Head_{cur} + 1 + Height_{lchld}$
				\State $Head_{rchld} \gets
Head_{rchld} + Height_{lchld} - Height_{rchld}$
                	   	\State {$Height_{rchld} \gets Height_{lchld}$} 
			  \EndIf
 
                \EndIf

                \State \Return ($Head_{cur}, Height_{cur}$)

            \EndFunction
        \end{algorithmic}
    \end{algorithm}

Algorithm~\ref{alg:headcompute} gives a recursive procedure to be followed to
compute the correct value of $Head$ for an EM. Given a formula,
to set the $Head$s of all the EMs in its monitor correctly, we call
this function on the root node of the formula's AST. This computes the $Head$s of the nodes
starting from the leaves and moves towards the root, balancing the
heights of the subtrees whenever a binary operator is encountered.

\section{Programmable MTL Monitor}
\label{sec:arch}

\begin{figure}[htbp]
		\centering
		\includegraphics[width=0.95\textwidth]{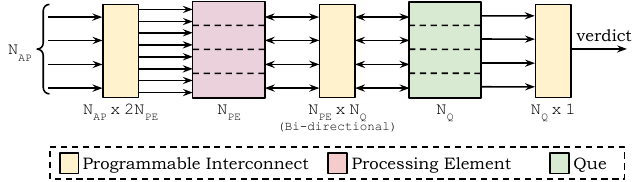}
		\caption{Architecture of the proposed MTL monitor (for $N=4$)} \label{fig:architecture}
\end{figure}

Figure~\ref{fig:architecture} shows the hardware design of the programmable MTL
monitor. The overall architecture can be broadly viewed as consisting of three
components: (i) Programmable Interconnects, (ii) $N_{PE}$ number of {\em
Processing Elements (PEs)}, and (iii) $N_{Q}$ number of {\em Que elements
(Qs)}.  A PE is an implementation of  the AM and a Que is an implementation of
the Que discussed in Section~\ref{sec:am_operators}.  The monitor supports upto
$N_{AP}$ number of APs.  The APs are connected to the PEs through a
programmable interconnect which allows communication in the {\em AP2PE} (AP
$\implies$ PE) direction.  The input AP values can be forwarded to any of the
inputs of a PE. Similarly, the PEs and Qs are connected through another
programmable interconnect which supports communication from any of the PEs to
any of the Qs, and vice versa.  Therefore, it supports upto $N_Q$ simultaneous
connections in the {\em PE2Q} (PE $\implies$ Q) direction, and upto $2N_{PE}$
connections in the {\em Q2PE} (Q $\implies$ PE) direction.  Each Q internally
implements a buffer of size $Q_{SZ}$.  A larger $Q_{SZ}$ supports a larger
temporal interval for an MTL operator.  To realize an EM for an MTL operator,
we program one or more PEs, as well as the Interconnect to connect the chosen
PEs to an appropriate Que.  A monitor for MTL formula $\phi$ is realized by
programming the Interconnects to connect the different Ques to the appropriate
PEs, according to the AST of $\phi$.  The generated verdict for $\phi$ is
available in the Que of the EM which corresponds to AST root node.  This
verdict is forwarded to the output pin through another programmable
interconnect (carrying traffic in the {\em Q2OUT} direction). This arrangement
allows any available PE and Q to be used to implement an MTL operator.

\subsection{Processing Element}
\label{sec:pe}

\begin{table}[htbp]
	\scriptsize
	\centering
	\caption{Programming of a Processing Element} \label{tab:pe_program}
  \setlength{\tabcolsep}{4pt} 
  \renewcommand{\arraystretch}{1.35}
	\begin{tabular}{|c|p{8cm}|c|}
		\hline
		\textbf{Field} & \multicolumn{1}{c|}{\textbf{Description}} & \textbf{\# bits} \\
		\hline
		\hline
		{\em isActive} & Is this PE being used & 1 \\
		\hline
		{\em op0Src} & First operand's source ({\tt 0:AP}, {\tt 1:Q}) & 1 \\
		\hline
		{\em op1Src} & Second operand's source ({\tt 0:AP}, {\tt 1:Q}) & 1 \\
		\hline
		{\em opcode} & Opcode ({\tt 000:wire}, {\tt 001:not}, {\tt 010:or}, {\tt 011:and}, and {\tt 100:implies}) & 3 \\
		\hline
		{\em r\_qid} & The ID of the queue where the results of this PE are placed & $\lceil \log_2N_Q \rceil$ \\
		\hline
		$\mathcal{I}_{\true}$ & Corresponds to the $\mathcal{I}_{\true}$  in the AM definition & $2 \lceil \log_2{Q_{SZ}} \rceil$ \\ 
		\hline
		$\mathcal{I}_{\false}$ & Corresponds to the $\mathcal{I}_{\false}$  in the AM definition & $2 \lceil \log_2{Q_{SZ}} \rceil$ \\ 
		\hline
	\end{tabular}
\end{table}

A PE can be programmed using instructions of the format described
in~Table~\ref{tab:pe_program}. The entries in Field column follow  the
definition of an AM. 
A PE may receive one or two operands, and each of
these may be either an AP coming from the input trace, or from a queue (which is
a result produced by another PE). 
It provides its results by
modifying the contents of a particular Que with ID $r\_{qid}$. 
Finally, since the number of PEs
    is fixed at design time, it is possible that some PEs may not be used for
    evaluating the current input MTL formula. Such PEs are deactivated by
    setting the {\em isActive} bit to $0$. The rest of the instruction bytes are
    ignored if this bit is set to $0$. The {\em isActive} bit is also sent
	to the PE2Q interconnect so that it can ignore outputs from a PE (stale 
	values on the pins) when it is inactive.

\begin{figure}[htbp]
		\centering
		\includegraphics[width=\textwidth]{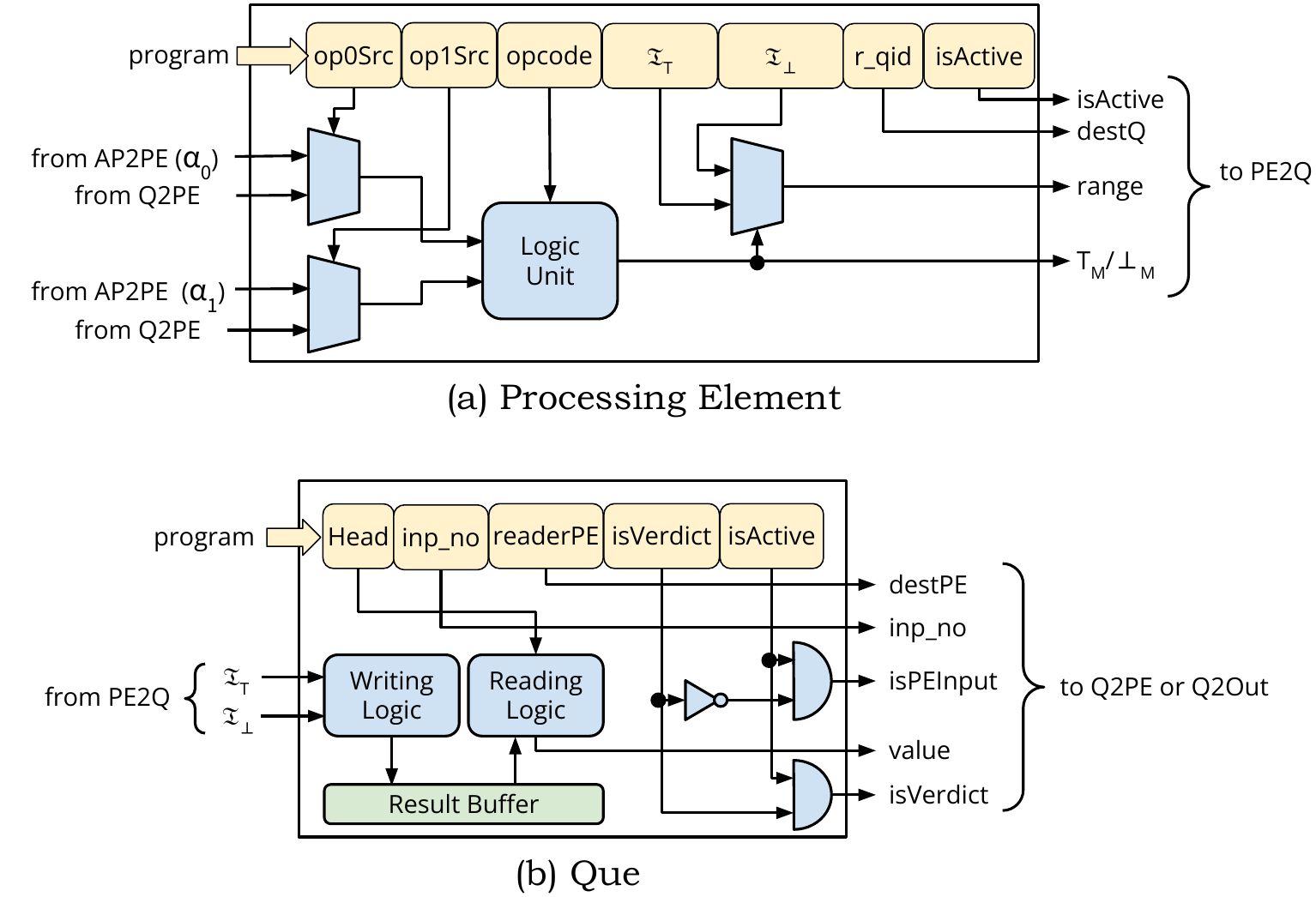}
		\caption{Architecture of a (a) Processing Element, and (b) Que} \label{fig:pe_q}
\end{figure}

Figure~\ref{fig:pe_q} shows the architecture of a PE. The datapath of a PE 
starts from the input pins coming in either
from AP2PE or Q2PE interconnects, for both the operands. The inputs go into the logic
unit which is capable of performing {\em all} the five basic operations (see
opcodes in Table~\ref{tab:pe_program}). The result of the operation is used as
the select line in a multiplexer. If the result of the operation is $\true$, the
programmed $\mathcal{I}_\true$ is sent to the output pins that are read
by the PE2Q interconnect. If the result is $\false$, the $\mathcal{I}_\false$
is sent to the output pins. The $r\_qid$ field is used as the
destination field by the interconnect so as to forward the values to the
appropriate queue.

\subsection{Que}
\label{sec:q}
Figure~\ref{fig:pe_q} shows the architecture of a Q. The Que's operation
follows the description given in Section~\ref{sec:am_operators}. In each cycle,
the que receives the $\mathcal{I}_\true$ and $\mathcal{I}_\false$ intervals
from the PE2Q interconnect, and modifies the contents of its buffer
accordingly. A Q can be programmed as indicated in Table~\ref{tab:q_program}.
The values deleted from a queue form the source operand of one or more PEs, or
it forms the verdict of the formula ($isVerdict$ is programmed to $\true$). In
the former case, the value is sent to the Q2PE interconnect, and in the latter
case, it is sent to the Q2Out interconnect to be communicated to the external
world. The position at which the deletion must be done is programmable.

\begin{table}[htbp]
	\scriptsize
	\centering
	\caption{Programming of a Que} \label{tab:q_program}
\setlength{\tabcolsep}{4pt} 
\renewcommand{\arraystretch}{1.35}
	\begin{tabular}{|c|p{8cm}|c|}
		\hline
		\textbf{Field} & \multicolumn{1}{c|}{\textbf{Description}} & \textbf{\# bits} \\
		\hline
		{\em isActive} & Is the Q being used & 1 \\
		\hline
		{\em isVerdict} & Are the queue contents the overall monitor verdicts & 1 \\
		\hline
		{\em readerPE} & ID of the PE that is reading the queue contents & $ \lceil \log_2N_{PE} \rceil$ \\
		\hline
		{\em inp\_no} & Are the queue contents the first or second operand of {\em readerPE} & 1 \\
		\hline
		{\em Head} & deletion position & $\lceil \log_2 Q_{SZ} \rceil$ \\
		\hline
	\end{tabular}
\end{table}

\subsection{Programmable Interconnects}
\label{sec:interconnects}
All four interconnects are implemented as single-hop crossbar
switches. This implies that in each interconnect, all sender modules 
have dedicated paths to all receiver modules, and for each sender, 
one of the paths may be chosen at runtime by programming
the switch. Since several PEs can forward
their results ($\mathcal{I}_\true$/$\mathcal{I}_\false$ and 
$\true_\maybe$/$\false_\maybe$) to the same Q, as in the case of the $\until$ operator 
(see Table~\ref{tab:am_operators}), the PE2Q interconnect coalesces the 
inputs from the several PEs before forwarding them to the destination Q. 
Such coalescing involves merging the $\mathcal{I}_\true$ 
(equivalently, $\mathcal{I}_\false$) intervals from different PEs into a single, contiguous
$\mathcal{I}_\true$ ($\mathcal{I}_\false$) interval. This
coalesced interval is the input of the destination Q. It is guaranteed that
coalescing will only increase the range, and that the values to be written
in these ranges will not conflict with each other 
(see Proposition~\ref{prop:simul_modify}), and hence the coalescing operation
 is straightforward.

\subsection{Monitor Programming}
\label{sec:example_monitor}

Our compiler follows a simple strategy. It beings by constructing the AST of the
input MTL formula. Then, we process the nodes in reverse topological order,
assigning one or more PEs to each node, and one queue to the single outgoing
edge of each node. We perform queue balancing as described in
Section~\ref{sec:em_composing}. Finally, the appropriate bits to program the
PEs, Qs and the interconnects are generated as discussed earlier.

	\begin{table}
		\scriptsize
		\centering
		\caption{Programming of different monitor components} \label{tab:example_monitor}
  \setlength{\tabcolsep}{4pt} 
  \renewcommand{\arraystretch}{1.35}
		\begin{tabular}{c}
			\begin{minipage}{\textwidth}
				\centering
				\begin{tabular}{|c|c|c|c|c|c|c|c|l|}
					\hline
					\multicolumn{9}{|c|}{\textbf{Programming of PEs}} \\
					\hline
					PE & {\em isActive} & {\em op0Src} & {\em op1Src} & {\em opcode} & {\em r\_qid} & $\mathcal{I}_\true$ & $\mathcal{I}_\false$ & \multicolumn{1}{c|}{Remarks} \\
					\hline
					$PE_0$ & true & AP & X & {\em not} & 0 & [0, 0] & [0, 0] & $EM_4$ from Fig~\ref{fig:am_example}(b) \\
					\hline
					$PE_1$ & true & AP & X & {\em wire} & 1 & [1, 4] & [4, 4] & $EM_3$ from Fig~\ref{fig:am_example}(b) \\
					\hline
					$PE_2$ & true & Q2PE & X & {\em wire} & 2 & [0, 1] & [1, 1] & $EM_2$ from Fig~\ref{fig:am_example}(b) \\
					\hline
					$PE_3$ & true & Q2PE & Q2PE & {\em or} & 3 & [0, 0] & [0, 0] & $EM_1$ from Fig~\ref{fig:am_example}(b) \\
					\hline
					$PE_{4-7}$ & false & X & X & X & X & X & X & unused \\
					\hline
				\end{tabular}
			\end{minipage}
			\\
			\begin{minipage}{\textwidth}
				\centering
  \setlength{\tabcolsep}{4pt} 
  \renewcommand{\arraystretch}{1.35}
				\begin{tabular}{|c|c|c|c|c|c|l|}
					\hline
					\multicolumn{7}{|c|}{\textbf{Programming of Qs}} \\
					\hline
					Q & {\em isActive} & {\em isVerdict} & {\em readerPE} & {\em inp\_no} & {\em Head} & \multicolumn{1}{c|}{Remarks} \\
					\hline
					$Q_0$ & true & false & 2 & 0 & 1 & $Q_4$ from Fig~\ref{fig:am_example}(b) \\
					\hline
					$Q_1$ & true & false & 3 & 1 & 5 & $Q_3$ from Fig~\ref{fig:am_example}(b) \\
					\hline
					$Q_2$ & true & false & 3 & 0 & 3 & $Q_2$ from Fig~\ref{fig:am_example}(b) \\
					\hline
					$Q_3$ & true & true & X & X & 1 & $Q_1$ from Fig~\ref{fig:am_example}(b) \\
					\hline
					$Q_{4-7}$ & false & X & X & X & X & unused \\
					\hline
				\end{tabular}
			\end{minipage}
		\end{tabular}
	\end{table}

We elucidate the working of the compiler by using the example considered in
Section~\ref{sec:em_composing}.  Table~\ref{tab:example_monitor} gives the
programming of the different monitor components. It assumes a monitor with
$N_{PE} = 8$, $N_Q = 8$, and $N_{AP} = 4$ (supports 4 APs $\sigma_0$ to
$\sigma_3$).

The number of bits required to program the PEs is $N_{PE} \times (6 + \lceil \log_2N_Q \rceil
+ 4 \times \lceil \log_2 Q_{SZ} \rceil )$. The number of bits required to program the Qs
is $N_Q \times (3 + \lceil \log_2N_{PE} \rceil + \lceil \log_2 Q_{SZ} \rceil)$. The number of bits
required to program the AP2PE interconnect is $N_{PE}\times 2 \times
\lceil \log_2N_{AP} \rceil$.

\section{Tool Description and Evaluation}
\label{sec:tool}

Since the design of the MTL monitor is parameterized by $N_{PE}$, $N_Q$,
$N_{AP}$ and $Q_{SZ}$, we have developed a tool that takes these parameters as
input and generates the corresponding design of the programmable monitor in the
Clash Hardware Description Language~\cite{clash}.  The Clash compiler is then
used to generate a synthesizable Verilog code from the generated Clash
description of the programmable monitor.  Our tool also generates the program
bits for the PEs and the Qs when the monitor needs to be programmed to monitor
a given formula. The program bits are sent into the MTL monitor 8-bits at a
time in consecutive clock cycles. This is done to restrict the number of I/O
pins in the design, but does not affect the functionality of circuit.
The tool also gives the verdict latency -- the difference
between the time a verdict is available and the time the verdict corresponds
to. The tool is available online and 
open-sourced~\cite{hebballi_2026_19084848}.

Our tool is coded in Python. The correctness of the generated HDL was verified
by simulating the design against several MTL properties. The input properties
were changed dynamically as exemplified in the Figure~\ref{fig:waves}
(mimicking runtime reprogramming). The testbench was also written using Clash.

We used the Synopsys Design Compiler (DC) and the $32nm$ Standard Cell Library
available in the Educational Design Kit (EDK) (SAED32nm\_EDK) for synthesizing
the design.

\subsection{Simulations}

\begin{figure}[htbp]
  \centering
  \includegraphics[width=0.85\textwidth]{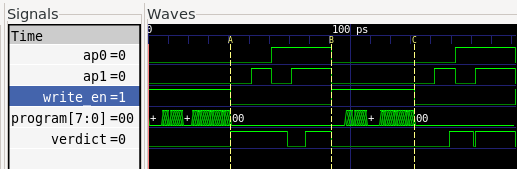}
  \caption{Simulation showing the MTL monitor being reprogrammed at runtime} \label{fig:waves}
\end{figure}

Figure~\ref{fig:waves} shows the waveform corresponding to a scenario the 
monitor is reprogrammed dynamically (at runtime) to monitor two different 
properties one after another. The Value Change Dump (VCD) file is
generated from a Clash simulation of a design $N_{PE} = N_Q = 4$, $N_{AP} = 8$
and $Q_{SZ} = 16$. The VCD file is visualized using GTKWave. The Clock, Reset
and Enable signals of a design are implicit in Clash, and hence are not seen in
the waveform. We have used the default Clash setting of a $1 ps$ clock period for
the simulations. The choice of clock period is irrelevant during
simulations because we are only interested in verifying the behavior in each clock
cycle. The design itself has been pipelined to increase the post-synthesis clock
frequency by adding flip-flops at the inputs of the interconnect.

The first property monitored in this scenario is $AP_0 \implies \next AP_1$,
and is followed by $AP_0 \lor \diamond_{[1,3]} AP_1$. The number of cycles to
reprogram the monitor depends on the choice of parameters such as $N_{PE},
N_{Q}$ and $N_{AP}$. For our simulated design, it takes 41 cycles to reprogram
the monitor because only 8 program bits are sent into the monitor in each clock
cycle (see the signal {\tt program[7:0]}). We observe that the cycles from
start of simulation (cycle 0) to Marker A (vertical dashed line at cycle 40)
programs the monitor to verify the first property, and cycles 91 to 131
reprograms the monitor to verify the second property (between Markers B and C).
The signal {\tt write\_en} is high during these periods.
The MTL monitor verifies the first property from cycle $41$ to cycle $90$, which lies
in between the Markers A and B. As per the semantics of the first property,
the verdict is \texttt{false} when $AP_0$ is \texttt{true} and $AP_1$ is
\texttt{false} in the next cycle, and the verdict is \texttt{true} otherwise.
In Figure~\ref{fig:waves}, we see that $AP_0$ is \texttt{false} from cycle $41$
to $61$, and in cycle $62$, it goes to \texttt{true}.  $AP_1$ is \texttt{false}
in cycle $63$. Therefore, the verdict for the formula is \texttt{false} for
cycle $62$.  This verdict is available at cycle $70$. Likewise, the verdict for
cycle $63$ (also \texttt{false}) is available at cycle $71$, and so on.  Thus,
the latency in this case is $8$ cycles (verdict for cycle $i$ is available in
cycle $(i+8)$), while the throughput is $1$ verdict per cycle.  This latency is
a function of the horizon of the formula being monitored. We can make similar
observations in the duration between Marker C (at $132ps$) and the end of
simulation (at $182ps$) when the second property is being monitored. In this
case, the verdicts are observed after a latency of $11$ cycles.

This shows that the generated MTL monitor can be reprogrammed only by changing
the values at its input pins without requiring its internal circuits to be
re-synthesized (akin to a general purpose processor).

\subsection{Synthesis Results}
\label{ssec:results}

\begin{figure}[htbp]
\centering
\begin{tabular}{c}
  \begin{minipage}{0.95\textwidth}
    \includegraphics[width=\textwidth]{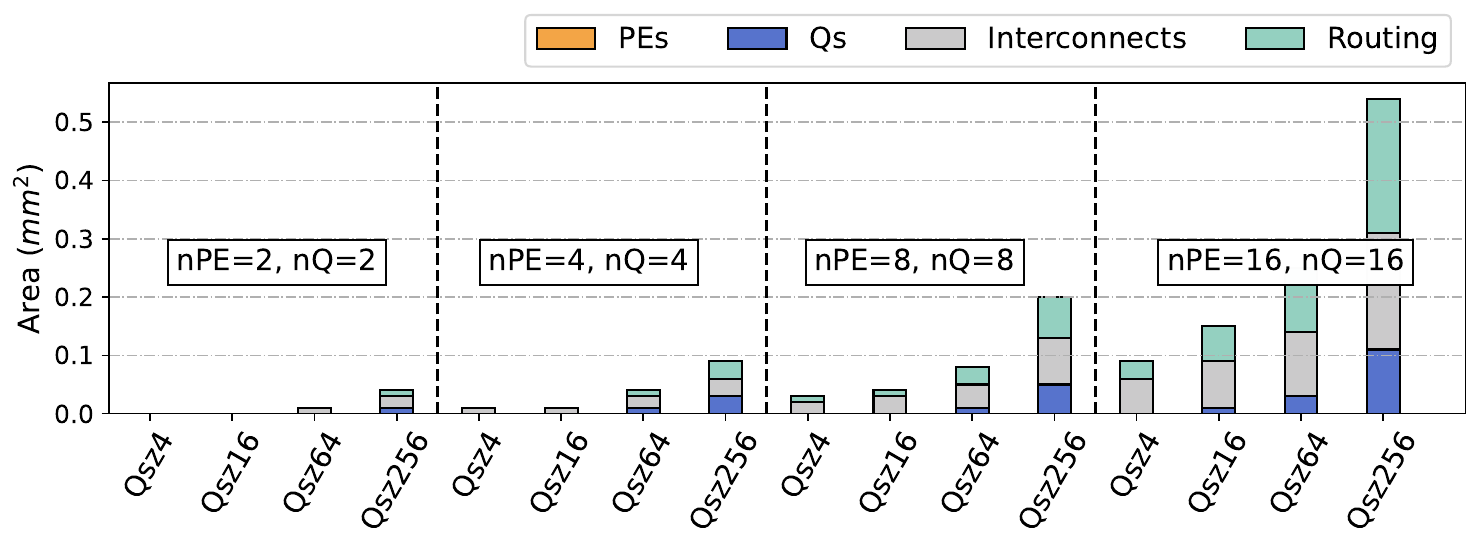}
    \caption{Area occupied by PEs, Qs and interconnects (for $N_{AP} = 16$)} \label{fig:nAP16_sqmm}
  \end{minipage}
  \\
  \begin{minipage}{0.95\textwidth}
    \includegraphics[width=\textwidth]{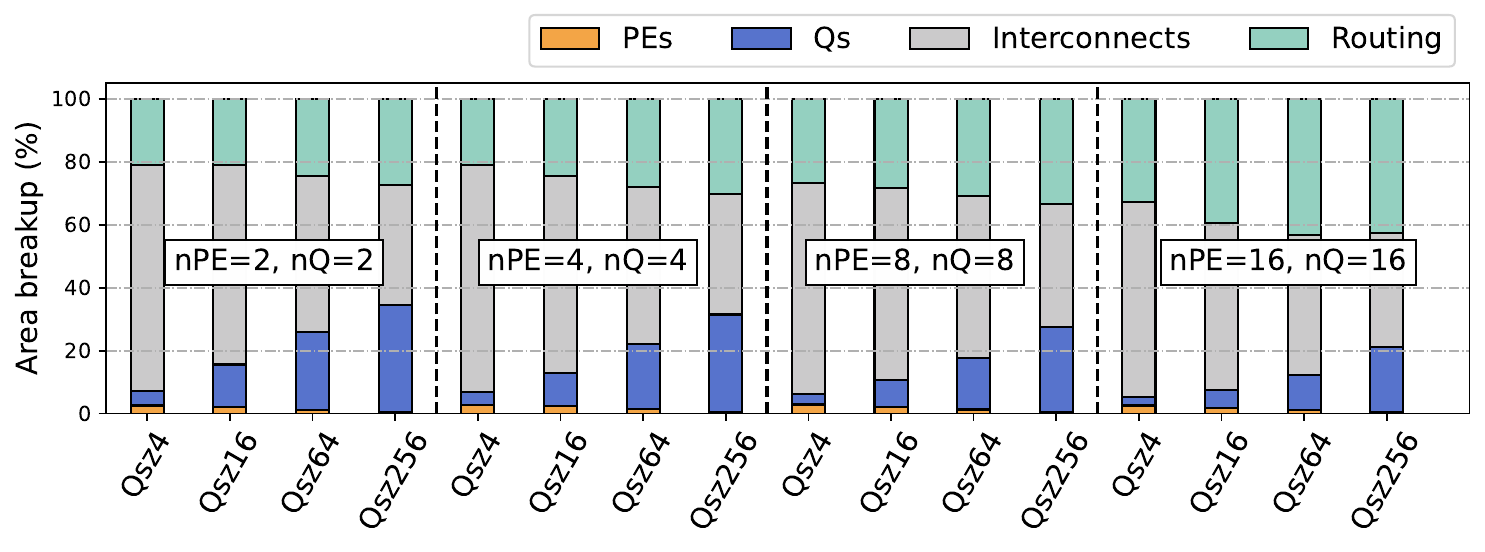}
    \caption{Percentage-wise area breakup of MTL Monitors (for $N_{AP} = 16$)} \label{fig:nAP16_percent}
  \end{minipage}
\end{tabular}
\end{figure}

We performed a design space exploration of a hardware monitor (supporting 
$N_{AP} = 16$) by varying $N_{PE} (2,4,8,16)$, $N_Q (2,4,8,16)$ and 
$Q_{SZ} (4,16,64,256)$ . $N_Q$ is set to $N_{PE}$ in all our experiments. Figure~\ref{fig:nAP16_sqmm} shows the total area occupied by 
each of the synthesized MTL monitor, along with its respective component-wise breakup.
Figure~\ref{fig:nAP16_percent} shows the percentage-wise area breakup of the same. 
The bars labeled {\em PEs}, {\em Qs}, and {\em Interconnects} capture 
the area occupied by the standard cells by these components, and the bar 
corresponding to {\em Routing} captures the area occupied by the wires connecting
these standard cells. We have grouped the standard cells correponding to 
components such as AP2PE, PE2Q, Q2PE and Q2OUT under Interconnects for brevity.

The total area scales superlinearly with increase in $N_{PE}$ (keeping $Q_{SZ}$
constant).  This is because the dominant contributors towards the total area --
Routing and Interconnects -- increase superlinearly. The reasons for this are
that with increase in $N_{PE}$, (i) the Interconnect has to support more input
and output ports, and (ii) the overall design is larger resulting in longer
wires. This sharp increase in area can be mitigated to some extent by using
better interconnect architectures instead of single-hop crossbar switches. The
exploration of better interconnect design is left for future work. In general,
such exploration of the design space is a standard practice in hardware design
and helps designers fine-tune the circuit to meet requirements of typical input
properties. Such exploration is a one-time (design) effort and need not be
repeated for every input property.  With increase in $Q_{SZ}$, we observe that
while the area occupied by the Ques increases superlinearly, the total area
increases sublinearly as this is the trend seen in the dominant Routing and
Interconnect components.

We also note that the largest design corresponding to the case 
$N_{PE}=N_Q=16$, $N_{AP}=16$ and $Q_{SZ}=256$ occupies $0.55 mm^2$ 
($547781 \mu m^2$) which is just $0.13$\% of a typical $400 mm^2$ die 
(usually that of a general purpose processor). 

Our synthesized monitors has a critical path delay of $0.8 ns$ (resulting in a
clock frequency of $1.25 GHz$). This effectively means that the monitor has a
high throughput -- it observes $1.25 \times 10^9$ observations per second, with
one verdict produced per cycle. This is significantly better than the typical
clock frequencies achievable (in $MHz$) using FPGAs.  The clock frequency can
be increased further (if required) by adding more flip-flops along the critical
path. This will increase the pipeline depth, which increases the number of
clock cycles required to get the verdict for a given input event (that is, the
latency), and the overall area occupied by the circuit. However, the throughput
continues to remain one verdict per cycle (when increasing the pipeline depth).
Since the design (or circuit) does not change with changing input properties,
such trade-offs between area and frequency needs to be explored only once (at
design time).

\section{Conclusion}
\label{sec:conc}

In this work, we proposed a hardware MTL monitor which is amenable to be
implemented using standard cells, while still being flexible to handle changes
in the properties to be monitored. This is achieved by using a set of
programmable PEs, Qs and interconnects. Each PE implements five basic operations
and a set of Que update rules, while the result buffer present inside a Q is
responsible for retaining timing behavior and holding the intermediate results
and verdicts generated. The interconnects are programmed to carry the results
between PEs and Qs and hence allows them to be composed to monitor complex MTL
properties. We develop a monitor Generator that takes as input parameters such
as $N_{PE}$ and $N_Q$ and generates the synthesizable verilog code of the
proposed monitor. We also developed an associated Compiler that takes the
property to monitor in addition to the aforementioned parameters and generates
program bits to program the monitor appropriately. Our simulations show that
generated designs can be programmed at runtime. We also observe that even a
fairly large MTL monitor which supports up to $16$ APs occupies only a modest
area of $0.55 mm^2$ while operating at $1.25 GHz$.

\bibliography{refs}
\appendix
\newpage  
\section*{Appendix : Proofs of correctness of Evaluator Machines}
\label{subsec:am_mtl_proofs} 
 
\begingroup
\scriptsize
\def\arraystretch{1.5}

\begin{longtable}{|c|c|c|c|p{5cm}|}
	\caption{Evaluator Machine operation corresponding to different MTL operators} \label{tab:am_operators} \\
		\toprule
		\multicolumn{2}{|c|}{\textbf{Operand value}} & \multicolumn{2}{c|}{\textbf{Queue modification}} & \multirow{3}{*}{\hspace{2.5cm}\textbf{Reasoning}} \\
		\multicolumn{2}{|c|}{\textbf{at time $i$}} & \multicolumn{2}{c|}{} & \\
		\cmidrule(lr){0-3}
		$\hspace{0.45cm}\alpha_0\hspace{0.45cm}$ & \hspace{0.45cm}$\alpha_1$\hspace{0.45cm} & \textbf{Value} & \textbf{Position(s)} & \\
    \hline
    \hline
		\multicolumn{5}{|c|}{\textbf{Not operator : $\neg \alpha_0$}} \\
    \hline
		$\false$ & - & $\true$ & $0$ & \multirow{2}{*}{$r^i = \neg \alpha_0^i$} \\
		\cmidrule(lr){0-3}
		$\true$ & - & $\false$ & $0$ & \\
    \hline
    \hline
		\multicolumn{5}{|c|}{\textbf{Or operator : $\alpha_0 \lor \alpha_1$}} \\
    \hline
		$\false$ & $\false$ & $\false$ & $0$ & \multirow{4}{*}{$r^i = \alpha_0^i \lor \alpha_1^i$} \\
		\cmidrule(lr){0-3}
		$\false$ & $\true$ & $\true$ & $0$ & \\
		\cmidrule(lr){0-3}
		$\true$ & $\false$ & $\true$ & $0$ & \\
		\cmidrule(lr){0-3}
		$\true$ & $\true$ & $\true$ & $0$ & \\
    \hline
    \hline
		\multicolumn{5}{|c|}{\textbf{And operator : $\alpha_0 \land \alpha_1$}} \\
    \hline
		$\false$ & $\false$ & $\false$ & $0$ & \multirow{4}{*}{$r^i = \alpha_0^i \land \alpha_1^i$} \\
		\cmidrule(lr){0-3}
		$\false$ & $\true$ & $\false$ & $0$ & \\
		\cmidrule(lr){0-3}
		$\true$ & $\false$ & $\false$ & $0$ & \\
		\cmidrule(lr){0-3}
		$\true$ & $\true$ & $\true$ & $0$ & \\
    \hline
    \hline
		\multicolumn{5}{|c|}{\textbf{Implies operator : $\alpha_0 \implies \alpha_1$}} \\
    \hline
		$\false$ & $\false$ & $\true$ & $0$ & \multirow{4}{*}{$r^i = \alpha_0^i \implies \alpha_1^i$} \\
		\cmidrule(lr){0-3}
		$\false$ & $\true$ & $\true$ & $0$ & \\
		\cmidrule(lr){0-3}
		$\true$ & $\false$ & $\false$ & $0$ & \\
		\cmidrule(lr){0-3}
		$\true$ & $\true$ & $\true$ & $0$ & \\
    \hline
    \hline
		\multicolumn{5}{|c|}{\textbf{Next operator : $\next\; \alpha_0$}} \\
    \hline
		$\false$ & - & $\false$ & $1$ & \multirow{2}{*}{$r^{i-1} = \alpha_0^i$} \\
		\cmidrule(lr){0-3}
		$\true$ & - & $\true$ & $1$ & \\
    \hline
    \hline
		\multicolumn{5}{|c|}{\textbf{Box operator : $\Box_{[t_1, t_2]}\; \alpha_0$}} \\
    \hline
		$\false$ & - & $\false$ & $[t_1, t_2]$ & $\alpha_0^i = \false \implies r^t = \false, \forall t \in [i-t_2, i-t_1]$ \\
    \hline
		$\true$ & - & $\true_\maybe$ & $t_2$ & $r^{i-t_2} \neq \false \land \alpha_0^i=\true \implies r^{i-t_2}=\true$ \\
    \hline
    \hline
		\multicolumn{5}{|c|}{\textbf{Diamond operator : $\diamond_{[t_1, t_2]}\; \alpha_0$}} \\
    \hline
		$\false$ & - & $\false_\maybe$ & $t_2$ & $r^{i-t_2} \neq \true \land \alpha_0^i=\false \implies r^{i-t_2}=\false$ \\
    \hline
		$\true$ & - & $\true$ & $[t_1, t_2]$ & $\alpha_0^i = \true \implies r^t = \true, \forall t \in [i-t_2, i-t_1]$ \\
    \hline
    \hline
		\multicolumn{5}{|c|}{\textbf{Until operator : $\alpha_0 \;\until_{[t_1, t_2]}\; \alpha_1$}} \\
    \hline
		$\false$ & $\false$ & $\false_\maybe$ & $t_2$ & [$M_2: wire\; \alpha_1$] $\alpha_1$ has never been true in the $[t_1, t_2]$ window for the time $i-t_2$ \\
		 & & $\false_\maybe$ & $[t_1, t_2-1]$ & [$M_3: or\; \alpha_0,\alpha_1$] $\forall t \in [i-t_2+1, i-t_1], r^t \neq \true \land \alpha_0^i = \false \land \alpha_1^i = \false \implies r^t = \false$ \\
		 & & $\false$ & $[0, t_1-1]$ & [$M_1: wire\; \alpha_0$] $\alpha_0 = \false$ has occurred in the $[0, t_1-1]$ window for times $[i, i-t_1+1]$ \\
    \hline
		$\false$ & $\true$ & $\false$ & $[0, t_1-1]$ & [$M_1: wire\; \alpha_0$] $\alpha_0 = \false$ has occurred in the $[0, t_1-1]$ window for times $[i, i-t_1+1]$ \\
		 & & $\true_\maybe$ & $[t_1, t_2]$ & [$M_2: wire\; \alpha_1$] $\forall t \in [i-t_2, i-t_1], r^t \neq \false \land \alpha_1^i=\true \implies r^t=\true$ \\
    \hline
		$\true$ & $\false$ & $\false_\maybe$ & $t_2$ & [$M_2: wire\; \alpha_1$] $\alpha_1$ has never been true in the $[t_1, t_2]$ window for the time $i-t_2$ \\
    \hline
		$\true$ & $\true$ & $\true_\maybe$ & $[t_1, t_2]$ & [$M_2: wire\; \alpha_1$] $\forall t \in [i-t_2, i-t_1], r^t \neq \false \land \alpha_1^i=\true \implies r^t=\true$ \\
    \hline
    \hline
		\multicolumn{5}{|c|}{\textbf{Until operator : $\alpha_0 \;\until_{[0, t_2]}\; \alpha_1$}} \\
    \hline
		$\false$ & $\false$ & $\false_\maybe$ & $t_2$ & [$M_2: wire\; \alpha_1$] $\alpha_1$ has never been true in the $[0, t_2]$ window for the time $i-t_2$ \\
		 & & $\false_\maybe$ & $[0, t_2-1]$ & [$M_1: or\; \alpha_0,\alpha_1$] $\forall t \in [i-t_2+1, i], r^t \neq \true \land \alpha_0^i = \false \land \alpha_1^i = \false \implies r^t = \false$ \\
    \hline
		$\bot$ & $\top$ & $\true_\maybe$ & $[0, t_2]$ & [$M_2: wire\; \alpha_1$] $\forall t \in [i-t_2, i], r^t \neq \false \land \alpha_1^i=\true \implies r^t=\true$ \\
    \hline
		$\top$ & $\bot$ & $\false_\maybe$ & $t_2$ & [$M_2: wire\; \alpha_1$] $\alpha_1$ has never been true in the $[0, t_2]$ window for the time $i-t_2$ \\
    \hline
		$\top$ & $\top$ & $\true_\maybe$ & $[0, t_2]$ & [$M_2: wire\; \alpha_1$] $\forall t \in [i-t_2, i], r^t \neq \false \land \alpha_1^i=\true \implies r^t=\true$ \\
    \hline
    \hline
		\multicolumn{5}{|l|}{$\true_\maybe$: write value $\true$, if current value of cell is $\maybe$} \\
		\multicolumn{5}{|l|}{$\false_\maybe$: write value $\false$, if current value of cell is $\maybe$} \\
		\multicolumn{5}{|l|}{$\alpha_k^t$: value of operand $\alpha_k$ at time $t$} \\
		\multicolumn{5}{|l|}{$r^t$: result corresponding to time $t$, as computed up to the current point in time} \\
    \hline
\end{longtable}

Table~\ref{tab:am_operators} describes the operating logic followed by the EM
corresponding to each of the MTL operators. For all possible combinations of
input operand values that the AM may receive in a certain cycle $i$
($\alpha_{\{0,1\}}^i$), column 4 shows the indices of the result queue within
the EM that are changed to the value shown in column 3, to record the effect
$\alpha_{\{0,1\}}^i$ has on a (partial) verdict. The reasoning given in column 5
helps intuitively prove that the EM's behavior accurately captures the
corresponding operator.

We now give formal proofs for each operator individually:

\subsubsection*{(Proof of Theorem~\ref{lab-correctness-thm} for Boolean operators)}:        
By definition, $l()$ maps all the Boolean operators to $1$. We
prove, for all time instances $i \geq 1$:
\begin{itemize}
	\item {\em if} $x^{i-1} \entails \phi$ {\em, then} $Q^{i}[1]=\true$.  
	\item {\em if} $x^{i-1} \doesnotentail \phi$ {\em, then} $Q^{i}[1]=\false$.  
\end{itemize}

\begin{proof}[$\phi =\neg \alpha$]:
	Assume $x^{i-1} \entails \neg \alpha$. This is possible only when,
at time step $i-1$, formula $\alpha$ evaluates to $\false$. 
	In this case AM sets $Q^{i-1}[0]=\true$, which is the verdict $r^{i-1}$
for $\neg \alpha$. 

	Assume $x^{i-1} \doesnotentail \neg \alpha$. This is possible only when,
at time step $i-1$, formula $\alpha$ evalutes to $\true$. 
In this case AM sets $Q^{i-1}[0]=\false$, which is the verdict $r^{i-1}$
for $\neg \alpha$.

In the next time step, when
	$\maybe$ is inserted, the verdict $r^{i-1}$ moves to $Q^{i}[1]$ as required,
	as the AM does not modify cell $1$.

\end{proof}  
 
\begin{proof}[$\phi = \alpha \land \beta$]:
	Assume $x^{i-1} \entails \alpha \land \beta$. This is possible only when,
at time instance $i-1$, both formulas $\alpha$  and $\beta$ evaluate to $\true$. 
	In this case AM sets $Q^{i-1}[0]=\true$, which is the verdict $r^{i-1}$
for $\alpha \land \beta$.
 
	Assume $x^{i-1} \doesnotentail \alpha \land \beta$. This is possible only when,
at time instance $i-1$, at least one of the formulas of $\alpha$ and
$\beta$  evaluates to $\false$. 
	In this case AM sets $Q^{i-1}[0]=\false$, which is the verdict $r^{i-1}$
for $\alpha \land \beta$. 

In the next time step, when
	$\maybe$ is inserted, the verdict $r^i$ moves to $Q^{i}[1]$ as required,
	as the AM does not modify cell $1$.
\end{proof}

\subsubsection*{(Proof of Theorem~\ref{lab-correctness-thm} for $\next \alpha$)}:        
\begin{proof}
	By definition, $l()$ maps the $\next$ operator to $2$.\\
Here we prove that, for all time instances $i$, with $i \geq 2$,  
\begin{center} 
	{\em if} $x^{i-2} \entails \next \alpha${\em, then} $Q^{i}[2]=\true$.  
\end{center} 
\begin{center} 
	{\em if} $x^{i-2} \doesnotentail \next \alpha${\em, then} $Q^{i}[2]=\false$.  
\end{center} 

By the definition of the AM, at time instance $i-2$, 
queue $Q^{i-2}[0]$ contains $\maybe$. This is 
the verdict corresponding to time $i-2$, 
and has not been resolved by the AM at this point in time.
 
Assume $x^{i-2} \entails \next \alpha$. This means $x^{i-1} \entails \alpha$.
	So AM sets $Q^{i-1}[1]=\true$, which is the verdict $r^{i-2}$ for $\next \alpha$
for instance $i-2$. 
 
Assume $x^{i-2} \doesnotentail \next \alpha$. This means $x^{i-1}
\doesnotentail \alpha$.
	So AM sets $Q^{i-1}[1]=\false$, which is the verdict $r^{i-2}$ for $\next \alpha$
for instance $i-2$. 

In the next time step, when
	$\maybe$ is inserted, the verdict $r^{i-2}$ moves to $Q^{i}[2]$ as required,
	as the AM does not modify cell $2$.

\end{proof}

\subsubsection*{(Proof of Theorem~\ref{lab-correctness-thm} for $\Box_{[t_1,t_2]} \alpha$)}:        
\begin{proof}
	By definition, $l()$ maps the $\Box_{[t_1,t_2]}$ operator to $t_2+1$.\\
To prove, for all time instances $i \geq t_2+1$:
\begin{itemize}
	\item {\em if} $x^{i-t_2-1} \entails \Box_{[t_1,t_2]}\alpha${\em, then} $Q^{i}[t_2+1]=\true$.  
	\item {\em if} $x^{i-t_2-1} \doesnotentail \Box_{[t_1,t_2]}\alpha${\em, then} $Q^{i}[t_2+1]=\false$.  
\end{itemize}
 
(\emph{Assume $x^{i-t_2-1} \entails \Box_{[t_1,t_2]}\alpha$}):
Then we have for all $l$ in $[i-1-t_2+t_1,i-1],~x^{l} \entails \alpha$.

We break the interval $[i-1-t_2,i-1]$ into four parts:
\begin{itemize}
	\item at time $i-t_2-1$, as per the definition of the AM, a value of $\maybe$
		is inserted into the queue at index $0$, and this value of $Q^{i-1-t_2}[0]$
		is not altered in this time unit. This value corresponds to the verdict $r^{i-1-t_2}$.
	\item for all times $l' \in [i-1-t_2+1, i-1-t_2+t_1-1]$, as per the definition of
		the AM, the values in cells $[0, t_1-1]$ are not altered. The
		value $\maybe$ which was set in $Q^{i-1-t_2}[0]$ simply shifts through
		$Q^{l'}[l'-(i-1-t_2)]$ unaltered.
	\item for all times $l'' \in [i-1-t_2+t_1, i-2]$, since $x^{l''} \entails \alpha$, as per the definition of
		the AM, the values in cells $[0, t_2-1]$ are not altered. The
		value $\maybe$ which was set in $Q^{i-1-t_2}[0]$ simply shifts through
		$Q^{l''}[l''-(i-1-t_2)]$ unaltered.
	\item at time $i-1$, since $x^{i-1} \entails \alpha$, as per the definition
		of the AM, the value in cell $Q^{i-1}[t_2]$ is set to $\true$.
\end{itemize}

(\emph{Assume $x^{i-t_2-1} \doesnotentail \Box_{[t_1,t_2]}\alpha$}):
Then we have for some $l$ in $[i-1-t_2+t_1,i-1],~x^{l} \doesnotentail \alpha$.

We break the interval $[i-1-t_2,i-1]$ into five parts:
\begin{itemize}
	\item at time $i-t_2-1$, as per the definition of the AM, a value of $\maybe$
		is inserted into the queue at index $0$, and this value of $Q^{i-1-t_2}[0]$
		is not altered in this time unit. This value corresponds to the verdict $r^{i-1-t_2}$.
	\item for all times $l' \in [i-1-t_2+1, i-1-t_2+t_1-1]$, as per the definition of
		the AM, the values in cells $[0, t_1-1]$ are not altered. The
		value $\maybe$ set in $Q^{i-1-t_2}[0]$ simply shifts through
		$Q^{l'}[l'-(i-1-t_2)]$ unaltered.
	\item for all times $l'' \in [i-1-t_2+t_1, l-1]$, since $x^{l''} \entails \alpha$, as per the definition of
		the AM, the values in cells $[0, t_2-1]$ are not altered. The
		value $\maybe$ which was set in $Q^{i-1-t_2}[0]$ simply shifts through
		$Q^{l''}[l''-(i-1-t_2)]$ unaltered.
	\item at time $l$, since $x^{l} \doesnotentail \alpha$, as per the definition
		of the AM, the value in cells $Q^{l}[t_1, t_2]$ is set to $\false$.
		This range includes the cell $Q^{l}[l-(i-1-t_2)]$, whose old value was $\maybe$.
	\item for all times $l''' \in [l+1, i-1]$, as per the definition of the AM, the value
		$\false$ which was set in $Q^{l}[l-(i-1-t_2)]$ simply shifts through $Q^{l'''}[l'''-(i-1-t_2)]$ unaltered.
		Thus, at time $i-1$, the value in cell $Q^{i-1}[t_2]$ is $\false$.
\end{itemize}

In the next time step, when
	$\maybe$ is inserted, the verdict $r^{i-1-t_2}$ moves to $Q^{i}[t_2+1]$ as required,
	as the AM does not modify cell $t_2+1$.
\end{proof} 

	Theorem~\ref{lab-correctness-thm} for $\phi=\Diamond_{[t_1,t_2]} \alpha$
	can be proven in a similar manner.

\subsubsection*{(Proof of Theorem~\ref{lab-correctness-thm} for $\phi=
\alpha \until_{[t_1,t_2]} \beta$)}:

\begin{proof} 
	By definition, $l()$ maps the $\until_{[t_1,t_2]}$ operator to $t_2+1$.\\
To prove, for all time instances $i \geq t_2+1$:
\begin{itemize}
	\item {\em if} $x^{i-t_2-1} \entails \alpha \until_{[t_1,t_2]} \beta${\em, then} $Q^{i}[t_2+1]=\true$.  
	\item {\em if} $x^{i-t_2-1} \doesnotentail \alpha \until_{[t_1,t_2]} \beta${\em, then} $Q^{i}[t_2+1]=\false$.  
\end{itemize}
 
(\emph{Assume $x^{i-t_2-1} \entails \alpha \until_{[t_1,t_2]} \beta$}):
 
Let $l$ be the earliest time instance in $[i-1-t_2+t_1,i-1]$ such that $x^{l} \entails \beta$ and 
for all $k$ such that  $i-1-t_2 \leq k< l-1, ~x^{l} \entails \alpha$.
We break the interval $[i-1-t_2,i-1]$ into five parts:
\begin{itemize}
\item At time $i-1-t_2$, as per the definition of the AM, a value of $\maybe$
		is inserted into the queue at index $0$, and this value of $Q^{i-1-t_2}[0]$
		is not altered in this time unit. This value corresponds to the verdict $r^{i-1-t_2}$.
\item For all times $l' \in [i-1-t_2+1, i-1-t_2+t_1-1]$, we know that 
      $x^{l'} \entails \alpha$ and as per the definition of the AM, 
      the values in cells $[0, t_1-1]$ are not altered. 
      The value $\maybe$ which was set in $Q^{i-1-t_2}[0]$ 
      simply shifts through $Q^{l'}[l'-(i-1-t_2)]$ unaltered.
\item For all times $l'' \in [i-1-t_2+t_1, l-1]$, we know that 
	$x^{l''} \entails \alpha$ and $x^{l''} \doesnotentail \beta$.
	As per the definition of the AM, 
      the values in cells $[0, t_2-1]$ are not altered. 
      The value $\maybe$ which was set in $Q^{i-1-t_2}[0]$ 
      simply shifts through $Q^{l''}[l''-(i-1-t_2)]$ unaltered.
\item At time $l$, we know that $x^{l} \entails \beta$.
	As per the definition of the AM, the value in cells $Q^{l}[t_1, t_2]$ is set to $\true$ if the current value is $\maybe$.
		This range includes the cell $Q^{l}[l-(i-1-t_2)]$, whose old value was $\maybe$.
\item for all times $l''' \in [l+1, i-1]$, as per the definition of the AM, the value
	$\true$ which was set in $Q^{l}[l-(i-1-t_2)]$ simply shifts through $Q^{l'''}[l'''-(i-1-t_2)]$ unaltered.
	Thus, at time $i-1$, the value in cell $Q^{i-1}[t_2]$ is $\true$.

\end{itemize}

	In the next time step, when
	$\maybe$ is inserted, the verdict $r^{i-1-t_2}$ moves to $Q^{i}[t_2+1]$ as required,
	as the AM does not modify cell $t_2+1$.\\

(\emph{Assume $x^{i-t_2-1} \doesnotentail \alpha \until_{[t_1,t_2]} \beta$}):

There are three scenarios where this can happen. We consider each scenario independently.\\

\emph{Scenario 1:}
Let there be some $k$ in $[i-1-t_2,i-1-t_2+t_1-1]$ such that $x^k \doesnotentail \alpha$.
We consider this interval as follows:
\begin{itemize}
\item At time $i-1-t_2$, as per the definition of the AM, a value of $\maybe$
		is inserted into the queue at index $0$. This value corresponds to the verdict $r^{i-1-t_2}$.
\item For all times $l' \in [i-1-t_2, k-1]$, we know that 
      $x^{l'} \entails \alpha$ and as per the definition of the AM, 
      the values in cells $[0, t_1-1]$ are not altered. 
      The value $\maybe$ which was set in $Q^{i-1-t_2}[0]$ 
      simply shifts through $Q^{l'}[l'-(i-1-t_2)]$ unaltered.
\item At time $k$, we know that $x^{k} \doesnotentail \alpha$.
	As per the definition of the AM, all cells in $[0,t_1-1]$
	are set to $\false$. This includes the cell $Q^k[k-(i-1-t_2)]$, whose old value was $\maybe$.
\item for all times $l'' \in [k+1, i-1]$, as per the definition of the AM, the value
	$\false$ which was set in $Q^{k}[k-(i-1-t_2)]$ simply shifts through $Q^{l''}[l''-(i-1-t_2)]$ unaltered.
	Thus, at time $i-1$, the value in cell $Q^{i-1}[t_2]$ is $\false$.
\end{itemize}
	In the next time step, when
	$\maybe$ is inserted, the verdict $r^{i-1-t_2}$ moves to $Q^{i}[t_2+1]$ as required,
	as the AM does not modify cell $t_2+1$.\\

\emph{Scenario 2:}
Let $l$ be the earliest time instance in $[i-1-t_2+t_1,i-1]$ such that $x^{l} \entails \beta$ and 
for all $k'$ in $[i-1-t_2,i-1-t_2+t_1-1],~x^{k'} \entails \alpha$ and
there exists some $k''$ in $[i-1-t_2+t_1,l-1]$ such that $x^{k''} \doesnotentail \alpha$.
We break the interval $[i-1-t_2,i-1]$ into five parts:
\begin{itemize}
\item At time $i-1-t_2$, as per the definition of the AM, a value of $\maybe$
		is inserted into the queue at index $0$, and this value of $Q^{i-1-t_2}[0]$
		is not altered in this time unit. This value corresponds to the verdict $r^{i-1-t_2}$.
\item For all times $l' \in [i-1-t_2+1, i-1-t_2+t_1-1]$, we know that 
      $x^{l'} \entails \alpha$ and as per the definition of the AM, 
      the values in cells $[0, t_1-1]$ are not altered. 
      The value $\maybe$ which was set in $Q^{i-1-t_2}[0]$ 
      simply shifts through $Q^{l'}[l'-(i-1-t_2)]$ unaltered.
\item For all times $l'' \in [i-1-t_2+t_1, k''-1]$, we know that 
	$x^{l''} \entails \alpha$ and $x^{l''} \doesnotentail \beta$.
	As per the definition of the AM, 
      the values in cells $[0, t_2-1]$ are not altered. 
      The value $\maybe$ which was set in $Q^{i-1-t_2}[0]$ 
      simply shifts through $Q^{l''}[l''-(i-1-t_2)]$ unaltered.
\item At time $k''$, we know that $x^{k''} \doesnotentail \alpha$ and $x^{k''} \doesnotentail \beta$.
	As per the definition of the AM, the value in cells $Q^{k''}[t_1, t_2]$ is set to $\false$ if the current value is $\maybe$.
		This range includes the cell $Q^{k''}[k''-(i-1-t_2)]$, whose old value was $\maybe$.
\item for all times $l''' \in [k''+1, i-1]$, as per the definition of the AM, the value
	$\false$ which was set in $Q^{k''}[k''-(i-1-t_2)]$ simply shifts through $Q^{l'''}[l'''-(i-1-t_2)]$ unaltered.
	Thus, at time $i-1$, the value in cell $Q^{i-1}[t_2]$ is $\false$.
\end{itemize}
	In the next time step, when
	$\maybe$ is inserted, the verdict $r^{i-1-t_2}$ moves to $Q^{i}[t_2+1]$ as required,
	as the AM does not modify cell $t_2+1$.\\

\emph{Scenario 3:}
For all $l$ in $[i-1-t_2+t_1,i-1]$, $x^{l} \doesnotentail \beta$ and 
for all $k$ in $[i-1-t_2, i-1$], $x^{k} \entails \alpha$.
We break the interval $[i-1-t_2,i-1]$ into five parts:
\begin{itemize}
\item At time $i-1-t_2$, as per the definition of the AM, a value of $\maybe$
		is inserted into the queue at index $0$, and this value of $Q^{i-1-t_2}[0]$
		is not altered in this time unit. This value corresponds to the verdict $r^{i-1-t_2}$.
\item For all times $l' \in [i-1-t_2+1, i-1-t_2+t_1-1]$, we know that 
      $x^{l'} \entails \alpha$ and as per the definition of the AM, 
      the values in cells $[0, t_1-1]$ are not altered. 
      The value $\maybe$ which was set in $Q^{i-1-t_2}[0]$ 
      simply shifts through $Q^{l'}[l'-(i-1-t_2)]$ unaltered.
\item For all times $l'' \in [i-1-t_2+t_1, i-2]$, we know that 
	$x^{l''} \entails \alpha$ and $x^{l''} \doesnotentail \beta$.
	As per the definition of the AM, 
      the values in cells $[0, t_2-1]$ are not altered. 
      The value $\maybe$ which was set in $Q^{i-1-t_2}[0]$ 
      simply shifts through $Q^{l''}[l''-(i-1-t_2)]$ unaltered.
\item At time $i-1$, we know that
	$x^{i-1} \entails \alpha$ and $x^{i-1} \doesnotentail \beta$.
	As per the definition of the AM, the value in cells $Q^{i-1}[t_2]$ is set to $\false$ as the current value is $\maybe$.
\end{itemize}
	In the next time step, when
	$\maybe$ is inserted, the verdict $r^{i-1-t_2}$ moves to $Q^{i}[t_2+1]$ as required,
	as the AM does not modify cell $t_2+1$.\\

	Theorem~\ref{lab-correctness-thm} for $\phi=\alpha \until_{[0,t_2]} \beta$
	can be proven in a similar manner.

\end{proof}

\subsubsection*{(Proof of Theorem~\ref{lab-correctness-thm2})}:
\begin{proof}
This is because of construction of $EM$ for all single operator MTL formulae $\phi$.
	As per Theorem~\ref{lab-correctness-thm}, the verdict corresponding to time $j-l(\odot)$
	is available at time $j$, and this verdict is available in cell $Q^j[l(\odot)]$.
An AM does not alter the contents of cells at positions
	greater than $l(\odot)$. Therefore, at time $j+k$, due to $k$ shifts,
	the verdict corresponding to time $j-l(\odot)$ is available unaltered in cell $Q^{(j+k)}[l(\odot) + k]$. 
\end{proof}

\subsubsection*{(Proof of Proposition~\ref{prop:simul_modify})}:
\begin{proof}
	From Table~\ref{tab:em_operations}, we can see that machine $M_1$ 
        modifies $I_1=[0,t_1-1]$ 
	and
						       $M_2$ modifies $I_2=[t_1,t_2]$ and $I_3=[t_2,t_2]$ and 
						       $M_3$ modifies $I_4=[t_1,t_2-1]$. 

	The pairs of overlapping intervals are $(\mathcal{I}_2,\mathcal{I}_3)$ and $(\mathcal{I}_2,\mathcal{I}_4)$.
	The interval $\mathcal{I}_2$ and $\mathcal{I}_3$ are modified by the same machine $M_2$, for different values of 
	$\alpha_1$, which is not possible in a given cycle, so the simultaenous modification of both intervals is not possible. 

	The interval $\mathcal{I}_2$ is modified by $M_2$ when $\alpha_1=\true$. 
	The interval $\mathcal{I}_4$ is modified by $M_3$ when $\alpha_1 \lor \alpha_0 =\false$, implying $\alpha_1=\false$.         Thus, the simultaenous modification of both intervals is not possible. 

\end{proof}

\end{document}